\let\theoremstyle\relax
\pgfplotsset{every axis plot post/.append style={rounded corners=.5\pgflinewidth}}
\let\cl@part\relax \makeatother
\crefname{figure}{Fig.}{Fig.}
\crefname{table}{Tab.}{Tab.}
\crefname{equation}{}{}
\crefname{section}{Section}{Sections}
\crefname{theorem}{Theorem}{Theorems}
\crefname{lemma}{Lemma}{Lemmas}
\crefname{defin}{Definition}{Definitions}
\theoremstyle{remark}
\newtheorem{theorem}{Theorem}
\newtheorem{lemma}{Lemma}
\newtheorem{ex}{Example}
\newtheorem*{remark}{Remark}
\newtheorem{defin}{Definition}
\newcommand{\abs}[1]{\left|#1\right|}
\newcommand{\field}[1]{\mathbb{#1}}
\newcommand{\R}{\field{R}}
\newcommand{\RH}{\field{RH}_\infty}
\newcommand{\RL}{\field{RL}_\infty}
\newcommand{\C}{\field{C}}
\newcommand{\bmtx}{\begin{bmatrix}}
\newcommand{\emtx}{\end{bmatrix}}
\newcommand{\bsmtx}{\left[ \begin{smallmatrix}} 
\newcommand{\esmtx}{\end{smallmatrix} \right]} 
\newcommand{\bmatarray}[1]{\left[\begin{array}{#1}}
\newcommand{\ematarray}{\end{array}\right]} 
\newcommand{\Dt}{\mathcal{D}_\tau}
\newcommand{\St}{\mathcal{S}_\tau}
\newcommand{\Dtr}{\mathcal{D}_{\bar\tau,r}}
\newcommand{\Str}{\mathcal{S}_{\bar\tau,r}}
\newcommand{\Dtbar}{\mathcal{D}_{\bar \tau}}
\newcommand{\Stbar}{\mathcal{S}_{\bar \tau}}
\newcommand{\hatDt}{\hat{\mathcal{D}}_\tau}
\newcommand{\hatSt}{\hat{\mathcal{S}}_\tau}
\newcommand{\hatStbar}{\hat{\mathcal{S}}_{\bar \tau}}
\newcommand{\FT}{\mathcal{F}}
\begin{document}


\title{An Overview of Integral Quadratic Constraints for Delayed Nonlinear
and Parameter-Varying Systems}

\author{Harald Pfifer and Peter Seiler 
\thanks{H. Pfifer and P. Seiler are with the Aerospace Engineering and Mechanics Department, University of Minnesota, emails:
        {\tt\small hpfifer@umn.edu, seiler@aem.umn.edu}}%
} 

\date{}
\maketitle

\begin{abstract}

  A general framework is presented for analyzing the stability and
  performance of nonlinear and linear parameter varying (LPV) time
  delayed systems. First, the input/output behavior of the time delay
  operator is bounded in the frequency domain by integral quadratic
  constraints (IQCs). A constant delay is a linear, time-invariant
  system and this leads to a simple, intuitive interpretation for
  these frequency domain constraints. This simple interpretation is
  used to derive new IQCs for both constant and varying
  delays. Second, the performance of nonlinear and LPV delayed systems
  is bounded using dissipation inequalities that incorporate
  IQCs. This step makes use of recent results that show, under 
  mild technical conditions, that an IQC has an equivalent
  representation as a finite-horizon time-domain constraint. Numerical
  examples are provided to demonstrate the effectiveness of the
  method for both class of systems.


\end{abstract}


\section{Introduction}

This paper presents a framework to analyze nonlinear or linear
parameter varying (LPV) time-delayed systems.  In this framework the
system is separated into a nonlinear or LPV system in feedback with a
time delay. Stability and performance is considered for both constant
and varying delays.  The analysis uses the concept of integral
quadratic constraints (IQCs) \cite{Megretski1997}. Specifically, IQCs
describe the behavior of a system in the frequency domain in terms of
an integral constraint on the Fourier transforms of the input/output
signals.  Several IQCs valid for constant and varying delays have
already appeared in the literature, see
e.g. \cite{Megretski1997,Kao2004,Kao2007}.

This paper has two main contributions.  The first contribution is to
provide a simple interpretation for IQCs used to describe constant
time delays. In particular, constant time delays are linear,
time-invariant (LTI) systems and hence they have an equivalent
frequency response representation.  Thus IQCs valid for constant
delays can, in most cases, be interpreted as a frequency dependent
circle in the Nyquist plane.  It is noted that this interpretation
previously appeared in the robust control literature: ``G''-scales for
robustness analysis with real parameter uncertainty can be interpreted
with circles in the complex plane \cite{doyle85,fan91}.  Here, the
geometric interpretation is used to construct a new IQC valid for
constant delays. Moreover, the frequency domain intepretation provides
insight for generating a new IQC valid for time-varying delays even
though such varying delays are not LTI. All these results are
contained in Section~\ref{sec:freqdomain}.

The second contribution of this paper is to apply general IQCs for
analysis of nonlinear and LPV delayed systems.  The standard IQC
stability theorem in \cite{Megretski1997} was formulated with
frequency domain conditions and hence requires the ``nominal'' part of
the feedback interconnection to be an LTI system. An application of
this stability theorem to LTI systems in feedback with a constant
delay is given in \cite{fu97}. Previous work on delayed
nonlinear systems bounded the nonlinear elements of
the system and the time delays by IQCs and considered this
frequency domain approach to analyze a ``nominal'' LTI systems
under IQCs, see e.g. \cite{Peet2007}. Here, dissipation inequality conditions
are derived to assess the stability and performance of ``nominal'' nonlinear and
LPV systems in feedback with a delay. The dissipation inequalities are
time-domain conditions but IQCs are typically expressed as frequency
domain constraints.  Thus the key technical issue is that the analysis
approach requires an equivalent time-domain interpretation for an IQC.
Previous work along these lines for constant IQCs has appeared in
Chapter 8 of \cite{Gu2002}.  In fact, a large class of IQCs, under
mild technical conditions, have an equivalent expression as a
finite-horizon, time-domain integral as recently shown in
\cite{Megretski2010,seiler13}. This time-domain expression enables
IQCs to be easily incorporated into a dissipation inequality condition
as shown in Section~\ref{sec:timedomain}.  These analysis conditions
can be formulated and efficiently solved as sum-of-squares
optimizations \cite{parrilo00} and semidefinite programs \cite{befb94}
for nonlinear and LPV delayed systems, respectively. Numerical
examples for both system types are given in
Section~\ref{sec:examples}. These results complement recent robust
performance conditions for LPV systems
\cite{Scherer2012,Kose2009,Pfifer2014}. This paper builds upon \cite{pfifer14Aut}. In addition to the results in \cite{pfifer14Aut}, it includes a detailed description of the generation of IQC multipliers for time delays.

There is a large body of literature on time-delayed systems as
summarized in \cite{Gu2002, Briat2014}.  The most closely related work is that
contained in \cite{fu97,Megretski1997,Kao2004,Kao2007} which use IQCs
to derive stability conditions for LTI systems with constant or
varying delays. As noted above, the contribution of this paper is to
extend these results to nonlinear and LPV delayed systems. Lyapunov
theory is an alternative framework in the literature of time-delayed
systems \cite{Gu2002,Gu1997,Fridman2002}.  This approach essentially
constructs Lyapunov-Krasovskii or Lyapunov-Razumikhin functionals to
assess the convergence of the free (initial-condition) response of the
delayed system. Stability conditions for nonlinear
\cite{Papachristodoulou2004,Papachristodoulou2009} and LPV
\cite{Zhang2002} delayed systems have been developed in the Lyapunov
framework.  These methods treat the time delay as integrated with the
dynamics of the plant. This is in contrast to the approach considered
here which uses input-output stability (forced response) and separates
the time delay from the ``nominal'' plant dynamics. A benefit of the
IQC framework is that extends naturally to systems with many delays
and/or uncertainties.  On the other hand, some stability conditions in
the Lyapunov-Krasovskii framework appear to use time-varying quadratic
constraints that do not have counterparts in the IQC literature. This
connection is not pursued here but may lead to new insights within the
IQC framework.

\section{Notation}
\label{sec:notation}

$\R$ and $\C$ denote the set of real and complex numbers,
respectively.  $\RL$ denotes the set of rational functions with real
coefficients that are proper and have no poles on the imaginary axis.
$\RH$ is the subset of functions in $\RL$ that are analytic in the
closed right half of the complex plane.  $\C^{m\times n}$,
$\RL^{m\times n}$ and $\RH^{m\times n}$ denote the sets of $m\times n$
matrices whose elements are in $\R$, $\C$, $\RL$, $\RH$,
respectively. A single superscript index is used for vectors,
e.g. $\R^n$ denotes the set of $n\times 1$ vectors whose elements are
in $\R$.  For $z\in\C$, $\bar{z}$ denotes the complex conjugate of
$z$.  For a matrix $M\in \C^{m\times n}$, $M^T$ denotes the transpose
and $M^*$ denotes the complex conjugate transpose.  The para-Hermitian
conjugate of $G \in \RL^{m\times n}$, denoted as $G^{\sim}$, is
defined by $G^{\sim}(s):=G(-\bar{s})^*$.  Note that on the imaginary
axis, $G^{\sim}(j\omega)=G(j\omega)^*$.  $L_2^n[0,\infty)$ is the
space of functions $v: [0,\infty) \rightarrow \R^n$ satisfying
$\|v\|<\infty$ where
\begin{align}
  \|v\| := \left[ \int_0^\infty v(t)^Tv(t) \, dt \right]^{0.5}
\end{align}
Given $v\in L_2^n[0,\infty)$, $v_T$ denotes the truncated function:
\begin{align}
v_T(t) := \left\{ 
\begin{array}{ll} 
v(t) & \mbox{ for } t \le T \\
0 & \mbox{ for } t > T 
\end{array}
\right.
\end{align}
The extended space, denoted $L_{2e}$, is the set of functions $v$ such
that $v_T \in L_2$ for all $T \ge 0$. Finally, the Fourier Transform
$\hat{v} := \FT(v)$ maps the time domain signal $v\in
L_2^n[0,\infty)$ to the frequency domain by
\begin{align}
  \hat{v}(j\omega) := \int_0^\infty e^{-j\omega t} v(t) dt
\end{align}

\section{Problem Formulation}
\label{sec:probform}

Consider the time-delay system given by the feedback interconnection
of a nonlinear, time-invariant system $\tilde G$ and a (constant)
delay $\Dt$ as shown in \cref{fig:Dtfeedback}. The delay $\tilde
w=\Dt(v)$ is defined by $\tilde w(t) = v(t-\tau)$ where $\tau$
specifies the delay.  The input/output signals can, in general, be
vector-valued with $\tilde w(t),v(t) \in \R^{n_v}$. The remaining
signals in the interconnection have dimensions $d(t)\in\R^{n_d}$ and
$e(t)\in\R^{n_e}$. The feedback interconnection is obtained by closing
the upper channels of $\tilde G$ with the time delay $\Dt$. This
feedback interconnection, denoted as $F_u(\tilde G,\Dt)$, gives a
time-delay system with input $d$ and output $e$.

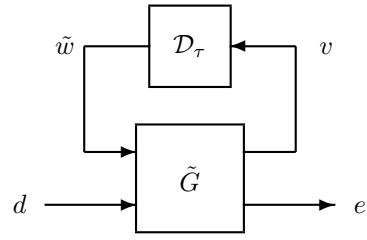
\begin{figure}[h!]
\centering
\scalebox{1.0}{
\begin{picture}(150,90)(20,20)
 \thicklines
 \put(75,25){\framebox(40,40){$\tilde{G}$}}
 \put(157,32){$e$}
 \put(115,35){\vector(1,0){35}}  
 \put(28,32){$d$}
 \put(40,35){\vector(1,0){35}}  
 \put(80,80){\framebox(30,30){$\Dt$}}
 \put(44,92){$\tilde w$}
 \put(55,55){\vector(1,0){20}}  
 \put(55,55){\line(0,1){40}}  
 \put(55,95){\line(1,0){25}}  
 \put(144,92){$v$}
 \put(135,95){\vector(-1,0){25}}  
 \put(135,55){\line(0,1){40}}  
 \put(115,55){\line(1,0){20}}  
\end{picture}
} 
\caption{Feedback interconnection with time delay $\Dt$}
\label{fig:Dtfeedback}
\end{figure}

A robust stability approach to analysis is pursued in this paper. Thus
it will be more convenient to express the system in terms of the
deviation between the delayed and the (nominal) undelayed signal,
$\St(v) := \Dt(v) - v$.  A loop transformation, shown in
\cref{fig:Stfeedback}, can be used to express the feedback
interconnection as $F_u(G,\St)$. This loop-shift amounts to the
replacement $\tilde{w} = w+v$ where $w:=\St(v)$. The system $G$
obtained after this loop-shift is assumed to be described by the
following finite-dimensional differential equation:
\begin{equation}
  \label{eq:G}
  \begin{split}
    \dot{x}_G & = f(x_G,w,d) \\
    v  & = h_1(x_G,w,d) \\
    e  & = h_2(x_G,w,d) 
  \end{split}
\end{equation}
where $x_G(t) \in \R^{n_G}$ is the state of $G$ at time $t$.

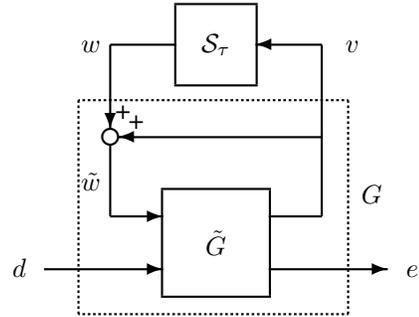
\begin{figure}[h!]
\centering
\scalebox{1.0}{
\begin{picture}(150,120)(20,15)
 \thicklines
 \put(75,25){\framebox(40,40){$\tilde{G}$}}
 \put(167,32){$e$}
 \put(115,35){\vector(1,0){45}}  
 \put(18,32){$d$}
 \put(30,35){\vector(1,0){45}}  
 \put(80,105){\framebox(30,30){$\St$}}
 \put(44,117){$w$}
 \put(55,120){\line(1,0){25}} 
 \put(144,117){$v$}
 \put(135,120){\vector(-1,0){25}}  
 \put(55,120){\vector(0,-1){32}}  
 \put(55,85){\circle{6}}
 \put(62,88){+}
 \put(57,92){+}
 \put(135,85){\vector(-1,0){77}}
 \put(55,82){\line(0,-1){27}}  
 \put(44,65){$\tilde w$}
 \put(55,55){\vector(1,0){20}}    
 \put(135,55){\line(0,1){65}}  
 \put(115,55){\line(1,0){20}} 
 \put(43,18){\dashbox(102,81){}}
 \put(150,60){$G$}
\end{picture}
} 
\caption{Loop transformation to $F_u(G,\St)$}
\label{fig:Stfeedback}
\end{figure}

An input-output approach is used in this paper to analyze the
stability and performance of the time-delay system. For a
given delay $\tau$, the induced $L_2$ gain for the feedback
interconnection from $d$ to $e$ is defined as:
\begin{align}
  \| F_u(G,\St) \| := \sup_{0\ne d \in L_2^{n_d}[0,\infty),  \ x_G(0)=0} 
      \frac{\|e\|}{\|d\|}
\end{align}
It is important to note that the restriction to time $t\ge 0$
implicitly assumes zero initial conditions for both $\Dt$ and $\St$.
Specifically, $\tilde w=\Dt(v)$ is more precisely defined on
$L_2[0,\infty)$ by $\tilde w(t)=0$ for $t \in [0,\tau)$ and $\tilde
w(t) = v(t-\tau)$ for $t\ge \tau$.  Similarly, $w=\St(v)$ is defined
on $L_2[0,\infty)$ by $w(t)=-v(t)$ for $t \in [0,\tau)$ and $ w(t) =
v(t-\tau)-v(t)$ for $t \ge \tau$. The notion of finite gain stability
used in this paper is defined next.

\vspace{0.1in}
\begin{defin}
  The feedback interconnection of $G$ and $\St$ is \underline{stable}
  if the interconnection is well-posed and if the mapping from $d$ to
  $e$ has finite $L_2$ gain, i.e. there exists a finite constant
  $\gamma>0$ such that $\|F_u(G,\St)\| \le \gamma$.
\end{defin}
\vspace{0.1in}


Two main analysis problems are considered.  First, determine the
largest value of $\bar{\tau}$ such that the feedback interconnection
is stable for all $\tau \in [0,\bar{\tau}]$. The first problem gives
the \emph{delay margin} for the system.  Second, given a delay $\tau$
less than the delay margin, determine the largest induced $L_2$ gain
from $d$ to $e$.  The second problem gives the performance of the
system for a fixed level of delay as measured by the $L_2$ gain.

Since the delay is constant, $\Dt$ defines a linear, time-invariant
(LTI) system. In this case the delay has a well-known frequency domain
representation. For constant delays $w=\Dt(v)$ can be expressed in the
frequency domain as $\hat{w}(j\omega) = \hatDt(j\omega)
\hat{v}(j\omega)$ where $\hatDt(j\omega) := e^{-j\omega \tau}$.  In
other words, the delay is equivalent to a frequency-by-frequency
multiplication by $\hatDt$.  Similarly, $\St$ has the frequency
response $\hatSt(j\omega) = e^{-j\omega \tau} - 1$.  These frequency
domain relations are used in the next section to derive simple,
geometric constraints satisfied by the input/output signals of $\St$.
Additional technical details on the frequency domain can be found in
standard textbooks, e.g. \cite{Dullerud99}.

Up to this point the presentation has focused entirely on constant
delays.  However, the analysis problems can be extended to consider
time-varying delays.  The time-varying delay $\tilde w=\Dtr(v)$ is
defined by $\tilde w(t) = v(t-\tau(t))$ where $\tau(t)$ specifies the
delay at time $t$.  The subscripts $\bar \tau$ and $r$ denote that the
delay satisfies $\tau(t) \in [0,\bar{\tau}]$ and $| \dot{\tau}(t) |
\le r$ for all $t \ge 0$. In other words, $\bar \tau$ is the maximum
delay and $r$ bounds the rate of variation.  If $r=0$ then $\Dtr$
corresponds to a constant delay with value $\tau \in [0,\bar \tau]$.
In addition, define $w=\Str(v)$ by $w = \Dtr(v)-v$, i.e. $\Str$ is the
deviation from the nominal undelayed signal.  A time-varying delay is
not a time-invariant system. Hence it does not have a valid
frequency-response interpretation. However, the frequency-domain
intuition can be used to derive constraints on the input/output
signals of a time-varying delay (\cref{sec:vtdqc}).

  

\section{Frequency Domain Inequalities}
\label{sec:freqdomain}

This section describes different frequency domain constraints on the
delay operator that can be incorporated into the input/output
analysis. The basic idea builds on common frequency domain
inequalities that have appeared in the literature,
e.g. \cite{Skogestad2005,Megretski1997,Gu2002}.  Given a constant delay
$\tau$, define the LTI system $\phi$ as:
\begin{equation}
\label{eq:w3}
\phi(s) := 2 \frac{(s {\tau})^2 + 3.5 s {\tau} + 10^{-6}}
            {(s {\tau})^2 + 4.5 s {\tau} + 7.1}.
\end{equation}
\cref{fig:staubounds1} shows the Bode magnitude plots for $\hatSt$
(solid line) and $\phi$ (dashed line) The weight $\phi$ is chosen to
satisfy $|\hatSt(j\omega)| \le |\phi(j\omega)|$ for all $\omega$ and
hence $\hatSt$ is a member of the following frequency weighted
uncertainty set:
\begin{equation}
  \label{eq:UncSet}
  \{ \Delta : |\Delta(j\omega)| \leq |\phi(j\omega)| \ \forall \omega \}
\end{equation} 
This magnitude bound has simple geometric and algebraic
interpretations at each frequency. The geometric interpretation of
this uncertainty set is a circle in the complex Nyquist plane as
depicted in \cref{fig:staubounds2}. $\hatSt(j\omega)$ follows a circle centered at $-1$ with radius $1$ (dashed circle). At each frequency
$\hatSt(j\omega)$ lies within the shaded circle of radius
$|\phi(j\omega)|$ centered at the origin.  An algebraic interpretation
can be given in terms of a quadratic constraint.  The Fourier
transforms for any input/output pair $w=\St(v)$ must satisfy the
following quadratic inequality at each frequency:
\begin{align}
\label{eq:QC1onSt}
  \bmtx \hat{v}(j \omega) \\ \hat{w}(j \omega) \emtx^* 
  \bmtx |\phi(j\omega)|^2 & 0 \\ 0 & -1 \emtx
  \bmtx \hat{v}(j \omega) \\ \hat{w}(j \omega) \emtx \geq 0
\end{align}
This quadratic constraint is just a restatement of the norm bound on
$\hatSt$ at each frequency. Using this basic intuition, additional
geometric constraints in the Nyquist plane can be expressed in the
form of quadratic constraints at each frequency.  Pointwise quadratic
constraints are discussed further in \cref{sec:qc} for general LTI
systems and the application to constant time delays is given in
\cref{sec:tdqc}.

\begin{figure}[h!]
\centering 
\subfigure[Bode Magnitude Plot for $\hatSt$ and bound $\phi$]
	{\input{figures/Staubound} \label{fig:staubounds1}}
\subfigure[Circle Interpretation for $|\hatSt(j\omega)|\le |\phi(j\omega)|$]
	{\includegraphics{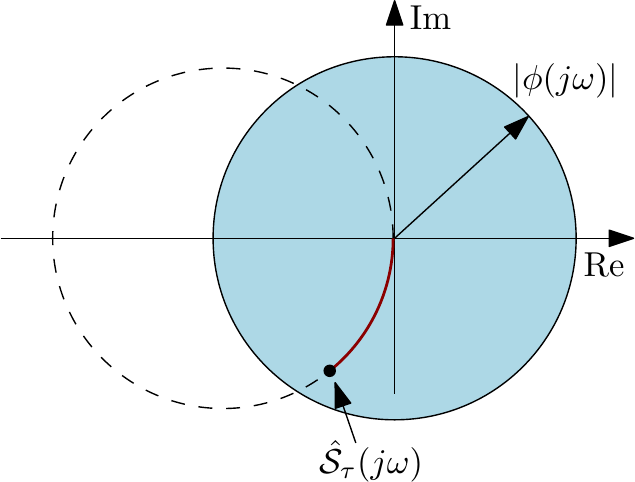} \label{fig:staubounds2}}
\caption{Norm Bound on $\hatSt$}
\label{fig:staubounds}
\end{figure}

\subsection{Pointwise Quadratic Constraints}
\label{sec:qc}

This section describes pointwise quadratic constraints for an LTI
system $\Delta$. For simplicity assume $\Delta$ is a single input,
single output (SISO) system.  The input/output relation $w=\Delta v$
is represented in the frequency domain by $\hat{w}(j\omega) =
\hat{\Delta}(j\omega)\hat{v}(j\omega)$. This representation can be
used to bound the input/output signals using frequency-by-frequency
quadratic constraints (QC).  

\vspace{0.1in}
\begin{defin}
\label{def:qc}
Let $\Pi: j \R \rightarrow \mathbb{C}^{2 \times 2}$ be a
Hermitian-valued function, called a "multiplier". Two signals $v,w\in
L_2[0,\infty)$ satisfy the QC defined by $\Pi$ if the following
inequality holds
\begin{align}
\label{eq:QC}
  \bmtx \hat{v}(j \omega) \\ \hat{w}(j \omega) \emtx^* 
  \Pi(j \omega) 
  \bmtx \hat{v}(j \omega) \\ \hat{w}(j \omega) \emtx \geq 0, \;
  \forall \omega
\end{align}
Moreover, the LTI system $\Delta$ satisfies the QC defined by $\Pi$,
denoted $\Delta \in \text{QC}(\Pi)$, if \cref{eq:QC} holds for all
$v\in L_2[0,\infty)$ and $w=\Delta v $.
\end{defin}
\vspace{0.1in}


These QCs have an intuitive geometric interpretation. Specifically,
the QCs can be interpreted as circle or half-plane constraints on the
Nyquist plot of $\hat{\Delta}(j\omega)$.  Partition the
$2\times 2$ multiplier as $\Pi= \bsmtx \pi_{11} & \pi_{21}^* \\
\pi_{21} & \pi_{22} \esmtx$. The diagonal entries are real since $\Pi$
is Hermitian. In addition, the QC is unaffected by positive scaling.
Specifically, let $\lambda:j\R \rightarrow \R$ be a frequency
dependent scaling that satisfies $\lambda(j\omega)>0$ $\forall
\omega$. Then $\Delta \in \text{QC}(\Pi)$ if and only if $\Delta \in
\text{QC}(\lambda\Pi)$.  Thus, without loss of generality, the
multiplier can be normalized to have $\pi_{22}(j\omega)=-1$, $0$, or
$+1$ for all $\omega$. This normalization provides a clearer geometric
interpretation for the QC as described in the next lemma.

\begin{lemma}
\label{lem:QC}
Let $\Delta$ be an LTI system satisfying $\Delta \in \text{QC}(\Pi)$.
At each frequency the QC can be normalized to one of three cases:
\begin{enumerate}
\item If $\pi_{22}(j\omega)=-1$ then
\begin{align}
  \label{eq:QCcircle}
  | \hat{\Delta}(j\omega) - \pi_{21}(j\omega) | 
     \le \sqrt{ \pi_{11}(j\omega) + |\pi_{21}(j\omega)|^2}
\end{align}
\item If $\pi_{22}(j\omega)=0$ then
\begin{align}
  \label{eq:QChp}
  0 \le \pi_{11}(j\omega) + \hat{\Delta}(j\omega)^* \pi_{21}(j\omega) + 
              \pi_{21}(j\omega) \hat{\Delta}(j\omega)     
\end{align}
\item If $\pi_{22}(j\omega)=+1$ then
\begin{align}
  \label{eq:QCcircle2}
  | \hat{\Delta}(j\omega) - \pi_{21}(j\omega) | 
     \ge \sqrt{ \pi_{11}(j\omega) + |\pi_{21}(j\omega)|^2}
\end{align}
\end{enumerate}
\end{lemma}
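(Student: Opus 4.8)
The plan is to strip the operator-level statement $\Delta\in\text{QC}(\Pi)$ down to a single real inequality on the frequency response $\hat\Delta(j\omega)$, holding frequency-by-frequency, and then read off the three normalized forms by completing the square. First I would use that $w=\Delta v$ means $\hat w(j\omega)=\hat\Delta(j\omega)\hat v(j\omega)$ at every $\omega$, so that the quadratic form in \cref{eq:QC} factors as
\begin{align*}
  \bmtx \hat v(j\omega) \\ \hat w(j\omega) \emtx^{*}\!\Pi(j\omega)\bmtx \hat v(j\omega) \\ \hat w(j\omega) \emtx
  \;=\; |\hat v(j\omega)|^{2}\;\bmtx 1 \\ \hat\Delta(j\omega) \emtx^{*}\!\Pi(j\omega)\bmtx 1 \\ \hat\Delta(j\omega) \emtx .
\end{align*}
By \cref{def:qc}, $\Delta\in\text{QC}(\Pi)$ forces the left side to be nonnegative for \emph{every} $v\in L_2[0,\infty)$; applying this with the single choice $v(t)=e^{-t}$ for $t\ge 0$, whose transform $1/(1+j\omega)$ never vanishes, and dividing by $|\hat v(j\omega)|^{2}>0$, yields the pointwise inequality $g(\omega):=\bsmtx 1 \\ \hat\Delta(j\omega) \esmtx^{*}\Pi(j\omega)\bsmtx 1 \\ \hat\Delta(j\omega) \esmtx\ge 0$ for all $\omega$. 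It is convenient here that \cref{def:qc} already poses the QC at each frequency, so no limiting argument in $L_2$ is needed.

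Next I would expand $g(\omega)$, using that $\pi_{11}(j\omega)$ and $\pi_{22}(j\omega)$ are real since $\Pi$ is Hermitian; then $g(\omega)\ge 0$ is a (generically quadratic) real inequality in the complex number $\hat\Delta(j\omega)$, with the coefficient of $|\hat\Delta(j\omega)|^{2}$ equal to $\pi_{22}(j\omega)$. Since a pointwise positive rescaling leaves the sign of $g$ unchanged — equivalently $\Delta\in\text{QC}(\Pi)\Leftrightarrow\Delta\in\text{QC}(\lambda\Pi)$ for $\lambda(j\omega)>0$, as already noted in the text — I would normalize $\Pi(j\omega)$ by $1/|\pi_{22}(j\omega)|$ wherever $\pi_{22}(j\omega)\neq 0$, which partitions the frequency axis into the three announced cases $\pi_{22}(j\omega)\in\{-1,0,+1\}$. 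For $\pi_{22}(j\omega)=0$, $g(\omega)\ge 0$ is affine in $\hat\Delta(j\omega)$ and is exactly the half-plane constraint \cref{eq:QChp}. For $\pi_{22}(j\omega)=\pm 1$, completing the square in $\hat\Delta(j\omega)$ — adding and subtracting $|\pi_{21}(j\omega)|^{2}$ to collect the quadratic and cross terms — recasts $g(\omega)\ge 0$ as the ``inside-a-circle'' bound \cref{eq:QCcircle} when $\pi_{22}(j\omega)=-1$ and as the complementary ``outside-a-circle'' bound \cref{eq:QCcircle2} when $\pi_{22}(j\omega)=+1$. In the first case the quantity under the square root in \cref{eq:QCcircle} is automatically nonnegative, since \cref{eq:QCcircle} exhibits it as an upper bound on $\big|\hat\Delta(j\omega)-\pi_{21}(j\omega)\big|^{2}\ge 0$, so that bound is well posed.

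I do not expect a genuine obstacle here: the reduction above is immediate from the form of \cref{def:qc}, and the case analysis is elementary algebra. The only thing that needs real care is the bookkeeping of the complex conjugates and signs when completing the square — pairing the value of $\pi_{22}(j\omega)$ with the correct interior/exterior geometry, centre, and radius — and checking that the degenerate sub-cases (e.g.\ $\pi_{21}(j\omega)=0$, $\Pi(j\omega)$ semidefinite so the ``circle'' collapses to a point or fills $\C$, or a vanishing radius) remain consistent with the inequalities as stated.
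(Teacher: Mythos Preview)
Your proposal is correct and follows essentially the same route as the paper: substitute $\hat w=\hat\Delta\,\hat v$ into the QC, factor out $|\hat v|^{2}$ to obtain the pointwise inequality $\bsmtx 1 \\ \hat\Delta \esmtx^{*}\Pi\bsmtx 1 \\ \hat\Delta \esmtx\ge 0$, expand it, and complete the square in the two sign-definite cases. The only addition you make beyond the paper's terse argument is the explicit choice of a test input with nonvanishing Fourier transform to justify dividing by $|\hat v(j\omega)|^{2}$, which is a harmless elaboration rather than a different idea.
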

\begin{proof}
  The QC in Equation~\ref{eq:QC} must hold for all input signals $v
  \in L_2[0,\infty)$.  Since $\hat{w}(j\omega) =
  \hat{\Delta}(j\omega)\hat{v}(j\omega)$, the QC can be rewritten as
  (dropping the dependence on $j\omega$):
  \begin{align}
    \label{eq:QC2}
    0 \le \bmtx 1 \\ \hat{\Delta} \emtx^* \Pi \bmtx 1 \\ \hat{\Delta}
    \emtx = \pi_{11} + \hat{\Delta}^* \pi_{21} + \pi_{21}^* \hat{\Delta} -
    \hat{\Delta}^* \pi_{22} \hat{\Delta}
  \end{align}
  If $\pi_{22}(j\omega)=0$ then this simplifies to
  Equation~\ref{eq:QChp}. If $\pi_{22}(j\omega)=\pm 1$ then complete
  the square to express Equation~\ref{eq:QC2} as in
  Equation~\ref{eq:QCcircle} or Equation~\ref{eq:QCcircle2}.
\end{proof}

Equation~\ref{eq:QCcircle} defines a circle and its interior (a disk)
in the complex plane centered at $\pi_{21}(j\omega)$ with radius given
by $\sqrt{ \pi_{11}(j\omega) + |\pi_{21}(j\omega)|^2}$.
Equation~\ref{eq:QChp} defines a half plane in the complex plane.  For
example if $\pi_{21}(j\omega)=1$ then \cref{eq:QChp} defines the half
plane given by $\text{Re}(\hat{\Delta}(j\omega)) \ge -\frac{1}{2}
\pi_{11}(j\omega)$.  Finally, the case where $\pi_{22}(j\omega)=+1$
corresponds to the non-convex set described by a circle and its
exterior.  Thus a QC defines a circle or half-plane constraint in the
complex (Nyquist) plane at each frequency.  The dissipation theory
developed below is only valid for $\pi_{22}(j\omega)<0$ and hence the
focus will be on quadratic constraints that are described by disks.
``G''-scales used for robustness analysis with real parameter
uncertainty can also be interpreted as a circle constraint
\cite{doyle85,fan91}.

Multiple QCs can be combined to obtain new QCs. If the LTI system
$\Delta$ satisfies the QCs defined by $\{ \Pi_k \}_{k=1}^N$ then
$\Delta$ also satisfies the QC defined by $\Pi(\lambda):= \sum_{k=1}^N
\lambda_k \Pi_k$ for any real, non-negative numbers
$\{\lambda_k\}_{k=1}^N$.  $\Pi(\lambda)$ is called a conic combination
of the multipliers $\{ \Pi_k \}_{k=1}^N$.  This fact enables many QCs
on $\Delta$ to be incorporated into an analysis.  However, it is
important to recognize that such conic combinations have certain
limitations.  The main limitation is that any conic combination is, by
\cref{lem:QC}, just a circle or half-plane constraint in the complex
plane.  As a concrete example, consider the circle constraint on $\St$
described by Equation~\ref{eq:QC1onSt} and shown in
\cref{fig:staubounds2}.  The corresponding multiplier is
$\Pi_1(j\omega) := \bsmtx |\phi(j\omega)|^2 & 0 \\ 0 & -1 \esmtx$.
$\St$ also satisfies the QC defined by $\Pi_2 = \bsmtx 0 & -1 \\ -1 &
-1 \esmtx$. This second multiplier corresponds to a circle centered at
$-1$ with unit radius, shown as the dashed circle in
\cref{fig:staubounds2}.  Thus $\St$ lies in the intersection of the
shaded circle centered at the origin (defined by $\Pi_1$) and the
dashed circle centered at $-1$ (defined by $\Pi_2$). However, the
conic combinations $\Pi(\lambda) := \lambda_1 \Pi_1 + \lambda_2 \Pi_2$
correspond to circles that ``cover'', i.e. outer bound, this
intersection.

\vspace{0.1in} 
\begin{remark}
  In fact, conic combinations of two QC multipliers are ``tight''.  Roughly,
  the conic combinations $\lambda_1 \Pi_1 + \lambda_2 \Pi_2$ define
  the smallest regions that contain the intersection of the sets
  defined by $\Pi_1$ and $\Pi_2$.  This statement is a geometric
  consequence of the S-procedure lossless theorem for complex
  constraints \cite{fradkov79,Dullerud99,jonsson06}.  However the
  S-procedure is not lossless, in general, for three or more QCs.
  Hence conic combinations need not provide a tight bound on the set
  described by the intersection of three or more multipliers.
\end{remark}
\vspace{0.1in} 

The QCs defined above for SISO systems can be extended, with only
notational changes, to multiple-input, multiple output (MIMO) systems.
It will be sufficiently general for the time-delay analysis to
consider repeated systems. Let $\Delta$ be a SISO, LTI system and
define the $n \times n$ repeated system $w=(\Delta \cdot I_n)(v)$ by
$w_i = \Delta v_i$ for $i=1,\ldots,n$.  If $\Delta \in \text{QC}(\Pi)$
for a $2 \times 2$ multiplier $\Pi$ then the following QC holds for
all $v\in L_2^n[0,\infty)$ and $w=(\Delta \cdot I_n) v$:
\begin{align}
\label{eq:QCmimo}
  \bmtx \hat{v}(j \omega) \\ \hat{w}(j \omega) \emtx^* 
  \bmtx \pi_{11}(j\omega) \cdot I_n & \pi_{12}(j\omega) \cdot  I_n \\
        \pi_{21}(j\omega) \cdot I_n & \pi_{22}(j\omega) \cdot I_n \emtx
  \bmtx \hat{v}(j \omega) \\ \hat{w}(j \omega) \emtx \geq 0, \;
  \forall \omega
\end{align}

Moreover $\Delta$ is LTI and hence $\Delta \cdot I_n$ commutes with
any $n\times n$, frequency-dependent matrix $D$, i.e. $D (\Delta \cdot
I_n) = (\Delta \cdot I_n) D$.  Thus the frequency-scaled system
$\bar{\Delta}:=D \Delta D^{-1}$ also satisfies the QC in
Equation~\ref{eq:QCmimo}.  Let $(\bar{v},\bar{w})$ be any input-output
pair for the scaled system $\bar{\Delta}$ as shown in
\cref{fig:dscales}. The associated input/output pair for the original
system $w= (\Delta \cdot I_n) v$ is related to the input/output pair
for the scaled system by $\bar{w}=Dw$ and $\bar{v}=Dv$. Hence, $\Delta
\cdot I_n$ also satisfies the QC with any multiplier 
$\Pi_n: j \R \rightarrow \mathbb{C}^{2n \times 2n}$ of the form
\begin{align}
\label{eq:QCmimo2}
\Pi_n(j\omega) :=
  \bmtx 
  \pi_{11}(j\omega) \cdot X(j\omega) & \pi_{12}(j\omega) \cdot X(j\omega) \\
  \pi_{21}(j\omega) \cdot X(j\omega) & \pi_{22}(j\omega) \cdot X(j\omega) 
  \emtx
\end{align}
where $X(j\omega):= D(j\omega)^* D(j\omega) \ge 0$.  This more
general, frequency-scaled multiplier can be used to reduce the
conservatism in the analysis. The use of $X$ is analogous to
the multipliers used in classical robustness analysis, e.g. the
structured singular value $\mu$
\cite{Safonov1980,Doyle1982,Packard1993,zhou96}.

\begin{figure}[h!]
  \centering
  \includegraphics{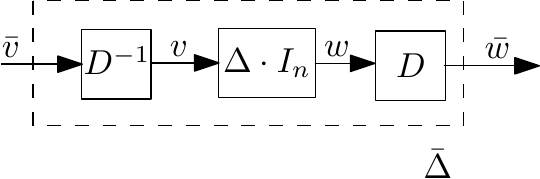}
  \caption{Scaling of the Operator $\Delta \cdot I_n$}
  \label{fig:dscales}
\end{figure}

\subsection{Application to Constant Time Delays}
\label{sec:tdqc}

There are numerous circle constraints that can be used to bound
$\St$. A few examples are provided below to illustrate how the
frequency-domain geometric interpretation can be used to impose
constraints on $\St$. The QCs on $\St$ can be converted, if needed,
into equivalent QCs on $\Dt$ by reversing the loop-transformation,
i.e. by replacing $w=\tilde{w}-v$ in the QC. For clarity, the QC
multipliers are given assuming $\St$ is SISO. However, as described
previously, frequency-dependent (matrix) scalings can be introduced if
$\St$ is MIMO.

\vspace{0.1in} 
\begin{ex}
  The Nyquist plot for $\hatSt$ follows a circle centered at $-1$ with
  radius $1$. Hence, the most basic QC to bound $\St$ describes
  exactly this circle. This corresponds to the following multiplier:
  \begin{align}
    \label{eq:Pi1}
    \Pi_1 := \bmtx 0 & -1 \\ -1 & -1 \emtx.
  \end{align}
\end{ex} 

\vspace{0.1in}
\begin{ex}
  The multiplier in Example 1 does not depend on the value of the time
  delay $\tau$. Thus the multiplier in \cref{eq:Pi1} describes a very
  conservative constraint due to this delay independence.  A simple
  delay dependent constraint is obtained from the frequency response
  $\hatSt(j\omega)=e^{-j\omega\tau} -1$.  Thus at each frequency
  $\hatSt(j\omega)$ lies within a circle centered at the origin of
  radius $|\hatSt(j\omega)|$.  The multiplier $\Pi_2$ for this circle
  is given by:
  \begin{align}
    \label{eq:Pi2}
    \Pi_2(j\omega) := \bmtx \abs{\hatSt(j\omega)}^2 & 0 \\ 0 & -1 \emtx
  \end{align}
\end{ex}

\vspace{0.1in}
\begin{ex}
  A smaller circle constraint can be constructed for $\St$.  The
  midpoint of the segment connecting $\hatSt(j\omega)$ and the origin
  is given by $\frac{1}{2}\hatSt(j\omega)$. The following multiplier
  $\Pi_3$ defines a circle centered at this midpoint with radius equal
  to the absolute value of this midpoint.
  \begin{align}
    \label{eq:Pi3}
    \Pi_3(j\omega) & := 
             \bmtx 0 & \frac{1}{2}\hatSt(j\omega) \\
               \frac{1}{2}\hatSt(j\omega) & -1 \emtx 
  \end{align}
  The QC described by $\Pi_3$ is shown in \cref{fig:circleathp}.
  \begin{figure}[h!]
    \centering
    \includegraphics{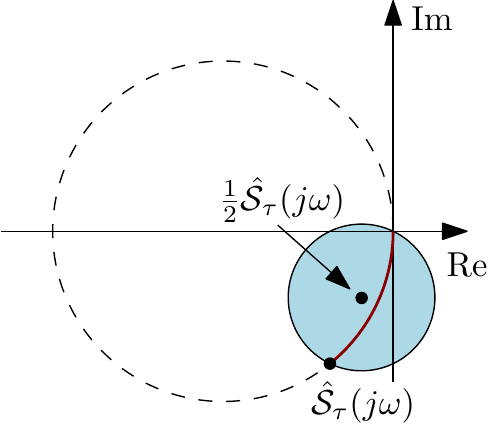}
    \caption{``Small'' Circle Constraint on $\St$ described by QC($\Pi_3$)}
    \label{fig:circleathp}
  \end{figure}
\end{ex}
\vspace{0.1in}


As can be seen by the examples, there exist numerous ways to bound
$\St$ using a QC. More examples are given in literature. For instance,
in \cite{Megretski1997} a multiplier is provided that corresponds to a
circle whose center moves along the imaginary axis with frequency.
Unfortunately there are no general rules for which specific QC will
provide the most useful stability and performance analysis results.
It might seem intuitive to use the QC that describes the smallest area
circle.  However, this does not, in general, provide the least
conservative analysis results.  The best solution is to specify many
different QCs and allow a numerical algorithm select the best conic
combination.  This will be described further in the next section.

Two practical issues must be addressed to make these QCs useful for
numerical analysis.  First, recall that a system $F_u(G,\St)$ has
delay margin of $\bar \tau$ if it is stable for all delays $\tau \in
[0,\bar \tau]$.  Thus the delay margin analysis requires a QC that
covers $\St$ for all $\tau \in [0,\bar \tau]$.  For example, the
following multiplier covers $\St$ for all constant delays $\tau \in
[0,\bar \tau]$:
\begin{align}
  \label{eq:Pi2til}
  \tilde{\Pi}_2(j\omega) := \begin{cases}
      \bmtx \abs{\hatStbar(j\omega)}^2 & 0 \\ 0 & -1 \emtx
      & \text{if } \omega \le \frac{\pi}{\bar \tau} \\ \\
      \bmtx 4 & 0 \\ 0 & -1 \emtx 
      & \text{else}
    \end{cases}
\end{align}
At each low frequency ($\omega \le \frac{\pi}{\bar \tau}$),
$\tilde{\Pi}_2$ describes a circle centered at the origin that covers
all Nyquist curves $\hatSt(j\omega)$ for $\tau \in [0,\bar \tau]$.
This is essentially equivalent to the multiplier in Example 2 at low
frequencies.  However, these circles fail to have the desired covering
property at higher frequencies.  For $\omega > \frac{\pi}{\bar \tau}$,
the multiplier $\tilde{\Pi}_2$ is set equal to a circle centered at
the origin of radius 2.  This ensures that the QC defined by
$\tilde{\Pi}_2$ covers the entire Nyquist curve of $\hatSt$ at high
frequencies. The other QC multipliers specified in the examples can be
similarly modified at high frequencies for use in delay margin
analysis.

The second practical issue is the need to approximate a QC multiplier
with a rational function so that state-space numerical methods can be
applied. For example $\tilde{\Pi}_2$ is a non-rational multiplier that
describes a circle at each frequency.  Let $\phi_2(s)$ be any stable
LTI system that satisfies $|\phi_2(j\omega)| \ge |\hatStbar(j\omega)|$
for $\omega \le \frac{\pi}{\bar \tau}$ and $|\phi_2(j\omega)|\ge 2$
for $\omega > \frac{\pi}{\bar \tau}$. The specific choice in
Equation~\ref{eq:w3} satisfies these constraints. Then $\Stbar$
satisfies the QC defined by the following (rational) multiplier for
all $\tau \in [0,\bar \tau]$:
\begin{align}
  \label{eq:Pi2bar}
  \bar{\Pi}_2(j\omega) & = \bmtx |\phi_2(j\omega)|^2 & 0 \\ 0 & -1 \emtx
\end{align}
$\bar{\Pi}_2(j\omega)$ describes a circle with larger radius than
$\tilde{\Pi}_2(j\omega)$ due to the choice of $\phi_2(s)$.  Hence $\St
\in \text{QC}(\tilde{\Pi}_2) \subset \text{QC}(\bar{\Pi}_2)$ for all
$\tau \in [0,\bar \tau]$.  $\bar{\Pi}_2$ is the weighted multiplier
(Equation~\ref{eq:QC1onSt}) discussed earlier in the section. It is
commonly used in the literature to analyze time delayed systems
\cite{Skogestad2005,Megretski1997,Gu2002}.  

\vspace{0.1in}
\begin{remark}
  Numerical tools can be used to aid the construction of rational
  function approximations for more complicated multipliers.  Briefly
  consider a (non-rational) multiplier $\Pi = \bsmtx \pi_{11} & \pi_{21}^* \\
  \pi_{21} & -1 \esmtx$.  By \cref{lem:QC} the QC associated with this
  multiplier defines a circle centered at $\pi_{21}(j\omega)$ with
  radius given by $\sqrt{ \pi_{11}(j\omega) + |\pi_{21}(j\omega)|^2}$.
  A rational approximation $\bar \Pi$ can be computed by the following
  procedure. Fit $\pi_{21}$ with a rational approximation
  $\bar{\pi}_{21}$, e.g. using the Matlab function \texttt{fitfrd} or
  some other numerical optimization. This approximation will introduce
  some fitting error and hence the radius of the rational multiplier
  may need to be increased to ensure $\St$ is covered.  This would
  require $\bar{\pi}_{11}$ to satisfy a lower bound constraint on its
  magnitude.  A rational approximation with a lower bound magnitude
  constraint can be computed for $\bar{\pi}_{11}$, e.g. using the
  Matlab function \texttt{fitmagfrd}. Using this procedure, the
  following rational fit was constructed for $\Pi_3$ that covers $\St$
  for all delays $\tau \in [0,\bar \tau]$
  \begin{align}
    \label{eq:Pi3bar}
    \bar{\Pi}_{3}(j\omega):= \bmtx 0 & \phi_3^*(j\omega)
            \\ \phi_3(j\omega)& -1 \emtx
  \end{align}
  where
  \begin{align}
    \phi_3(j\omega) := 
      \frac{-2.19 \left( \frac{j\omega}{\bar{\tau}} \right)^2 
           + 9.02 \left( \frac{j\omega}{\bar{\tau}} \right) + 0.089} 
      {\left( \frac{j\omega}{\bar{\tau}} \right)^2  
           - 5.64 \left( \frac{j\omega}{\bar{\tau}} \right) - 17.0}.
  \end{align}
  
 Note that $\pi_{21}$ does not need to be stable, as is the case with the given choice of $\phi_3$.
\end{remark}
\vspace{0.1in}



\subsection{Application to Varying Time Delays}
\label{sec:vtdqc}

The QCs given for constant time delays hold at each frequency.  These
point-wise QCs fall within a more general framework based on integral
quadratic constraints (IQCs) introduced in \cite{Megretski1997}. A
definition is now given for IQCs that extends the one given in
\cref{def:qc} for QCs.

\vspace{0.1in}
\begin{defin}
\label{def:iqc}
Let $\Pi: j\R \rightarrow \C^{(m_1+m_2)\times(m_1+m_2)}$ be a
Hermitian-valued function.  Two signals $v \in L_2^{m_1}[0,\infty)$
and $w \in L_2^{m_2}[0,\infty)$ satisfy the integral quadratic
constraint (IQC) defined by $\Pi$ if
\begin{align}
\label{eq:iqccond}
\int_{-\infty}^{\infty} 
\bsmtx \hat{v}(j\omega) \\ \hat{w}(j\omega) \esmtx^*
 \Pi(j\omega)
\bsmtx \hat{v}(j\omega) \\ \hat{w}(j\omega) \esmtx
d\omega
\ge 0
\end{align}
where $\hat{v}(j\omega)$ and $\hat{w}(j\omega)$ are Fourier transforms
of $v$ and $w$, respectively. A bounded, causal operator $\Delta :
L_{2e}^{m_1}[0,\infty) \rightarrow L_{2e}^{m_2}[0,\infty)$ satisfies
the IQC defined by $\Pi$, denoted $\Delta \in \text{IQC}(\Pi)$, if
\cref{eq:iqccond} holds for all $v \in L_2^{m_1}[0,\infty)$ and
$w=\Delta(v)$.
\end{defin}
\vspace{0.1in}

Clearly, if the QC holds pointwise in frequency it also holds when
integrated over all frequencies, i.e. $\Delta \in \text{QC}(\Pi)$
implies $\Delta \in \text{IQC}(\Pi)$. The converse is not true and the
more general IQC theory can be applied to nonlinear and/or time
varying perturbations.  IQCs were introduced in \cite{Megretski1997}
to provide a general framework for robustness analysis.  Here, the
focus will be on the use of IQCs to describe the input/output behavior
of time-varying delays.  A time-varying delay is not a time-invariant
system. Hence the point-wise QCs cannot be used and the more general
IQCs are required.  However, the frequency domain arguments used to
construct QCs for constant delays can provide intuition regarding IQCs
for time-varying delays.  The remainder of this section reviews known
IQCs for time-varying delays \cite{Kao2004,Kao2007,Gu2002}. In
addition a new IQC for time-varying delays is constructed using the
frequency domain intuition gained from constant-delays.

Recall the notation $\Dtr$ and $\Str$ introduced for varying delays
where $\bar \tau$ and $r$ are bounds on the maximum delay and rate of
variation, respectively.  The basic IQCs for time-varying delays arise
from two simple norm bounds.  First, if $r<1$ then the induced $L_2$
gain of the varying-time delay can be bounded by $\| \Dtr \| \le
\frac{1}{\sqrt{1-r}}$ (Section 3.2 in \cite{Gu2002} and Lemma 1 in
\cite{Kao2007}). Second, let $\Str \circ \frac{1}{s}$ denote $\Str$
composed with an integrator at the input.  The induced $L_2$ gain of
this combined system can be bounded by $\| \Str \circ \frac{1}{s} \|
\le \bar \tau$ (Lemma 1 in \cite{Kao2004}). These two bounds are tight
in the sense that the gain is achieved for some input signal $v$ and
time-varying delay $\tau(t)$ that satisfies the bounds $\bar \tau$ and
$r$ (Lemma 1 in \cite{Kao2007}).  Three IQCs are now provided for
time-varying delays.  These examples parallel the QCs provided in the
previous subsection for constant delays. For clarity the multipliers
are expressed assuming $\Dtr$ and $\Str$ are SISO. The extension to the
MIMO case is discussed below.

\vspace{0.1in} 
\begin{ex}
  Let $\tilde w=\Dtr(v)$ be a time-varying delay satisfying $r<1$. The
  bound $\| \Dtr \| \le \frac{1}{\sqrt{1-r}}$ implies that $\| \tilde
  w\| \le \frac{1}{\sqrt{1-r}} \| v\|$ for all input signals $v\in
  L_2[0,\infty)$. After performing the loop-transformation
  $w:=\Str(v)$, i.e. $\tilde w = w+v$, this inequality can be written
  as:
  \begin{align}
    \int_0^\infty \bmtx v(t) \\ w(t) \emtx^T
       \bmtx \frac{r}{1-r}  & -1 \\ -1 & -1 \emtx
       \bmtx v(t) \\ w(t) \emtx dt \ge 0
  \end{align}
  By Parseval's theorem, this inequality can be equivalently expressed 
  in the frequency domain as:
  \begin{align}
    \int_{-\infty}^{\infty} 
    \bmtx \hat{v}(j\omega) \\ \hat{w}(j\omega) \emtx^*
    \bmtx \frac{r}{1-r}  & -1 \\ -1 & -1 \emtx
    \bmtx \hat{v}(j\omega) \\ \hat{w}(j\omega) \emtx
    d\omega
    \ge 0
  \end{align}
  Thus $\Str$ satisfies the IQC defined by the following multiplier:
  \begin{align}
    \label{eq:Pi4}
    \Pi_4 :=     \bmtx \frac{r}{1-r}  & -1 \\ -1 & -1 \emtx
  \end{align}
  This multiplier is analogous to the multiplier $\Pi_1$ given in
  Example 1 for constant delays.  $\Pi_1$ corresponds to a unit circle
  in the Nyquist plane centered at $-1$. At each frequency
  $\Pi_4$ can also be viewed as a circle centered at $-1$ but
  with radius enlarged to $\frac{1}{\sqrt{1-r}}>1$ to account for the
  varying delays.  However this is not a precise statement since the
  constraints defined by $\text{IQC}(\Pi_4)$ only hold when integrated
  over all frequencies.  
\end{ex} 

\vspace{0.1in}
\begin{ex}
  The multiplier in Example 4 depends on the rate of variation $r$
  but does not depend on the maximum delay $\bar \tau$. Proposition 2
  in \cite{Kao2007} gives a delay-dependent IQC that can be used to
  reduce the conservatism.  The IQC in \cite{Kao2007} depends on a
  rational bounded transfer function $\phi_5(s)$ that satisfies:
  \begin{align}
    |\phi_5(j\omega)| > \left\{ 
         \begin{array}{ll}
               \bar \tau | \omega | & \text{if } 
               \bar \tau |\omega| \le 1 + \frac{1}{\sqrt{1-r}} \\
               1+ \frac{1}{\sqrt{1-r}} & \text{if } 
               \bar \tau |\omega| > 1 + \frac{1}{\sqrt{1-r}} 
               \end{array}
             \right.
  \end{align}
  If $r<1$ then $\Str$ satisfies the IQC defined by the following
  multiplier $\Pi_5$:
  \begin{align}
    \label{eq:Pi5}
    \Pi_5(j\omega) := \bmtx |\phi_5(j\omega)|^2  & 0 \\ 0 & -1 \emtx
  \end{align}
  The proof that $\Str \in \text{IQC}(\Pi_5)$ uses the bound $\| \Str
  \circ \frac{1}{s} \| \le \bar \tau$.  The IQC multiplier $\Pi_5$ is
  analogous to the QC multiplier $\Pi_2$ provided in Example 2 for
  constant delays. Specifically, a Taylor series expansion for
  $\hatStbar$ can be used to show that $\Pi_5$ is equivalent to
  $\Pi_2$ at low frequencies.  The bound on $|\phi_5|$ effectively
  increases the radius of circle constraints defined by $\Pi_5$ at
  high frequencies to account for the time variations and to cover all
  delays in $[0,\bar \tau]$.  Note that as $r\rightarrow 0$, the high
  frequency bound in $\Pi_5$ becomes $|\phi_5(j\omega)| > 2$.  Thus as
  $r\rightarrow 0$, $\Pi_5$ converges to the multipliers
  $\tilde{\Pi}_2$ used to cover all constant delays in $[0,\bar
  \tau]$.  Again, these interpretations of $\Pi_5$ are imprecise and
  only meant to provide an intuitive interpretation. Proposition 3 in
  \cite{Kao2007} gives a similar IQC multiplier that is valid for
  $r<2$.
\end{ex}

\vspace{0.1in}
\begin{ex}
  Example 3 in \cref{sec:tdqc} provided a QC multiplier $\Pi_3$ for
  constant delays corresponding to a circle centered at $\frac{1}{2}
  \hatSt(j\omega)$.  The benefit of this multiplier is that it defined
  a smaller circle than the multipliers given in Examples 1 and 2.
  This frequency domain intuition can be used to derive a new, related
  IQC for time-varying delays.  The new IQC depends on a rational
  transfer function $\phi_6(s)$ that satisfies:
  \begin{align}
    \label{eq:phi6}
    |\phi_6(j\omega)| > \left\{ 
         \begin{array}{ll}
               \frac{1}{2} \bar \tau | \omega | & \text{if } 
               \frac{1}{2} \bar \tau |\omega| \le 1 + \frac{1}{\sqrt{1-r}} \\
               1+ \frac{1}{\sqrt{1-r}} & \text{if } 
               \frac{1}{2} \bar \tau |\omega| > 1 + \frac{1}{\sqrt{1-r}} 
               \end{array}
             \right.
  \end{align}
  It is shown in Appendix~\ref{app:newiqc} that if $r<1$ then $\Str$
  satisfies the IQC defined by the following multiplier
  $\Pi_6$:
 \begin{align}
   \label{eq:Pi6}
   \Pi_6(j\omega)  := 
   \bmtx |\phi_6(j\omega)|^2 - \frac{1}{4} | \hatStbar(j\omega)|^2
   & \frac{1}{2} \hatStbar(j\omega) \\
   \frac{1}{2} \hatStbar(j\omega) & -1 \emtx 
 \end{align}
 The proof that $\Str \in \text{IQC}(\Pi_6)$ uses the bound $\| (\Str
 -\frac{1}{2} \Stbar) \circ \frac{1}{s} \| \le \frac{1}{2} \bar \tau$.
 Again, a Taylor series expansion for $\hatStbar$ can be used to show
 that $\Pi_6$ is equivalent to the analogous multiplier for constant
 delays $\Pi_3$ at low frequencies.  The bound on $|\phi_6|$
 effectively increases the radius of circle constraints defined by
 $\Pi_6$ at high frequencies to account for the time variations and to
 cover all delays in $[0,\bar \tau]$.
\end{ex}
\vspace{0.1in}

Each of the multipliers can be generalized for the case where $\Str$
is MIMO using the idea of scalings already introduced in
Section~\ref{sec:qc}.  $\Str$ is not a time-invariant system and hence
frequency-dependent scalings cannot be used. However, the linearity of
$\Str$ can be used to show that constant matrix scalings can be
introduced into the multiplier.  For example, $\Pi_4$ remains
a valid IQC multiplier for $\Str$ if it is generalized to include
any matrix $X \ge 0$:
\begin{align}
  \label{eq:Pi4X}
  \Pi_4 := \bmtx \frac{r}{1-r} X & -X \\ -X & -X \emtx.
\end{align}
A formal proof that $\Str \in \text{IQC}(\Pi_4)$ is given in
Proposition 1 of \cite{Kao2007}.  Reference \cite{Kao2007} derives
additional IQCs for time-varying delays.  In particular,
frequency-dependent scalings can be introduced into the multipliers
for varying delays but, in this case, a swapping lemma must be used to
account for the the time-variations in the delay. This section does
not intend to provide an exhaustive review of IQCs for time-varying
delays.  Instead the main purpose is to demonstrate the benefit of the
frequency-domain intuition provided by QCs for constant delays.
Example 6 is a new IQC derived for time-varying delays using this
intuition and it should be possible to derive additional useful IQCs
using this approach.

\section{Time Domain Stability Analysis}
\label{sec:timedomain}
%

The previous section defined frequency domain constraints that
describe the input/output behavior of a time delay.  These were
specified as QCs for constant delays and IQCs for time-varying delays.
This section shows that, under some mild technical conditions, these
constraints have an equivalent time domain representation
(\cref{sec:tdIQC}).  The alternative time domain representation for
QCs and IQCs is used to derive dissipation inequality based
stability conditions for delayed nonlinear and parameter varying
systems (Sections~\ref{sec:nlstab} and~\ref{sec:lpvstab}).

\subsection{Time Domain IQCs}
\label{sec:tdIQC}


Let $\Pi$ be an QC or IQC multiplier that is a rational and uniformly
bounded function of $j\omega$, i.e. $\Pi \in \RL^{(m_1+m_2) \times
  (m_1+m_2)}$. As noted previously QCs hold pointwise in frequency and
hence they are also valid when integrated over frequency. In other
words, if $\Pi$ is a valid QC multiplier for a  system $\Delta$
then it is also a valid IQC multiplier for the same system.  Thus it
is sufficient to provide a time domain interpretation for IQCs.

The time domain interpretation is based on factorizing the multiplier
as $\Pi = \Psi^{\sim} M \Psi$ where $M=M^T \in \R^{n_z \times n_z}$
and $\Psi \in \RH^{n_z \times(m_1+m_2)}$. The restriction to rational,
bounded multipliers $\Pi$ ensures that such factorizations can be
numerically computed via transfer function or state-space methods
\cite{youla61,Scherer2004,materassi09}. Such factorizations are not
unique and two specific factorizations are provided in
Appendix~\ref{app:IQCfac} using state-space methods.  For a general
factorization, $\Psi$ is assumed to be stable but may be non-square
($n_z\ne m$) and possibly non-minimum phase.

Next, let $(v,w)$ be a pair of signals that satisfy the IQC in
\cref{eq:iqccond} and define
$\hat{z}(j\omega):= \Psi(j \omega) \bsmtx \hat{v}(j\omega) \\
\hat{w}(j\omega) \esmtx$.  Then the IQC can be written as:
\begin{align}
\label{eq:iqccond2}
\int_{-\infty}^{\infty} 
\hat{z}(j\omega)^*
M
\hat{z}(j\omega)
d\omega
\ge 0
\end{align}
By Parseval's theorem \cite{zhou96}, this frequency-domain inequality
can be equivalently expressed in the time-domain as:
\begin{align}
\label{eq:iqctd}
\int_0^\infty z(t)^T M z(t) \, dt \ge 0
\end{align}
where $z$ is the output of the LTI system $\Psi$:

\begin{equation}
  \label{eq:ltipsi}
  \begin{split}
    \dot{\psi}(t) & = A_\psi \psi(t) + B_{\psi 1} v(t) + B_{\psi 2} w(t) 
    , \;\;\; \psi(0) = 0 \\
    z(t) & = C_\psi \psi(t) + D_{\psi 1} v(t) + D_{\psi 2} w(t) 
  \end{split}
\end{equation}
Thus signals $v \in L_2^{m_1}[0,\infty)$ and $w \in
L_2^{m_2}[0,\infty)$ satisfy the IQC defined by $\Pi=\Psi^\sim M \Psi$
if and only if the filtered signal $z=\Psi \bsmtx v \\ w \esmtx$
satisfies the time domain constraint in \cref{eq:iqctd}.
Similarly, a bounded, causal system $\Delta$ satisfies the IQC defined
by $\Pi=\Psi^\sim M \Psi$ if and only if \cref{eq:iqctd} holds
for all $v \in L_2^{m_1}[0,\infty)$ and $w=\Delta(v)$. To simplify
notation, $\Delta \in \text{IQC}(\Pi)$ will also be denoted by $\Delta
\in \text{IQC}(\Psi,M)$. \cref{fig:iqcpsi} provides a graphical
interpretation for this time-domain form of $\Delta \in
\text{IQC}(\Psi,M)$. The input and output signals of $\Delta$ are
filtered through $\Psi$ and the output $z$ satisfies the time-domain
inequality in \cref{eq:iqctd}.  A simple example is provided
to illustrate the connection between the time domain and frequency
domain constraints.

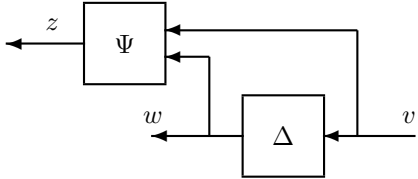
\begin{figure}[h]
\centering
\scalebox{1.0}{
\begin{picture}(150,65)(-10,80)
 \thicklines
 \put(80,80){\framebox(30,30){$\Delta$}}
 \put(42,100){$w$}
 \put(80,95){\vector(-1,0){35}}  
 \put(140,100){$v$}
 \put(145,95){\vector(-1,0){35}}  
 \put(67,95){\line(0,1){30}}
 \put(67,125){\vector(-1,0){17}}
 \put(123,95){\line(0,1){40}}
 \put(123,135){\vector(-1,0){73}}
 \put(20,115){\framebox(30,30){$\Psi$}}
 \put(5,135){$z$}
 \put(20,130){\vector(-1,0){30}}
\end{picture}
} 
\caption{Graphical interpretation of the IQC defined by $\Pi=\Psi^\sim M \Psi$}
\label{fig:iqcpsi}
\end{figure}

\begin{ex}
  Consider the multiplier $\bar{\Pi}_2(j\omega) = \bsmtx
  |\phi_2(j\omega)|^2 & 0 \\ 0 & -1 \esmtx$
  (Equation~\ref{eq:Pi2bar}) where $\phi_2$ is a stable LTI
  system that satisfies certain bounds.  This multiplier is a rational
  approximation that corresponds, at low frequencies, to a circle
  centered at the origin with a frequency-dependent radius.  $\St \in
  \text{QC}(\bar{\Pi}_2)$ implies $\St \in
  \text{IQC}(\bar{\Pi}_2)$. Hence any $v=L_2[0,\infty)$ and $w = \St
  v$ satisfy
  \begin{align}
    \int_{-\infty}^{\infty} 
    \bsmtx \hat{v}(j\omega) \\ \hat{w}(j\omega) \esmtx^*
    \bar{\Pi}_2(j\omega)
    \bsmtx \hat{v}(j\omega) \\ \hat{w}(j\omega) \esmtx
    d\omega
    \ge 0
  \end{align} 
  A factorization for this multiplier is given by $M = \bsmtx 1 & 0 \\
  0 & -1 \esmtx$ and $\Psi = \bsmtx \phi_2(j \omega) & 0 \\ 0 & 1
  \esmtx$. If $w = \St v$ then $z:=\Psi \bsmtx v \\ w \esmtx = \bsmtx
  \phi_2 v \\ w \esmtx$.  Hence the time-domain form for the IQC
  is
  \begin{align}
    \int_0^\infty z(t)^T M z(t) \, dt =
    \int_0^\infty z_1^2(t) - z_2^2(t) \, dt \ge 0
  \end{align}
  where $z_2 = w$ and $z_1 = \phi_2 v$. In other words, $w$ has $L_2$
  norm less than the filtered signal $\phi_2 v$, i.e. $\|w \| \le \|
  \phi_2 v\|$.
\end{ex}
\vspace{0.1in}

It is important to note that the time domain constraint
\cref{eq:iqctd} holds, in general, only over infinite time intervals.
The term \underline{hard IQC} was introduced in \cite{Megretski1997}
referring to the following more restrictive property: $\Delta$
satisfies the IQC defined by $\Pi$ and $\int_0^T z(t)^T M z(t) \ dt
\ge 0$ holds for all $T \ge 0$, $v \in L_{2e}^{m_1}[0,\infty)$ and
$w=\Delta(v)$.  By contrast, IQCs for which the time domain constraint
need not hold over all finite time intervals are called
\underline{soft IQCs}.  Hard and soft IQCs were later generalized in
\cite{megretskiCSH} to include the effect of initial conditions and
the terms were renamed complete and conditional IQCs,
respectively. The hard/soft terminology will be used here.

This distinction between hard and soft IQCs is important because the
dissipation inequality theorems developed below require the use of
time-domain constraints that hold over all finite-time intervals.  One
issue is that the factorization of $\Pi$ as $\Psi^\sim M \Psi$ is not
unique and hence there is ambiguity surrounding the hard/soft IQC
terminology.  As a result, the characterizations of hard and soft are
not inherent to the IQC multiplier $\Pi$ but instead depend on the
factorization $(\Psi,M)$.  A more precise definition that refers to the
factorization $(\Psi,M)$ is now given.

\vspace{0.1in}
\begin{defin}
  Let $\Pi$ be factorized as $\Psi^\sim M \Psi$ with $\Psi$ stable.
  Then $(\Psi,M)$ is a \underline{hard IQC factorization} of
  $\Pi$ if for any bounded, causal operator $\Delta \in
  \text{IQC}(\Pi)$ the following time-domain inequality holds
  \begin{align}
    \label{eq:hardIQCdef}
    \int_0^T z(t)^T M z(t) \ dt \ge 0
  \end{align}
  for all $T\ge 0$, $v \in L_2^{m_1}[0,\infty)$, $w=\Delta(v)$, and
  $z=\Psi \bsmtx v \\ w \esmtx$.
\end{defin}
\vspace{0.1in}

It was recently shown that a broad class of IQC multipliers have a
hard factorization \cite{Megretski2010}.  The proof uses a new min/max
theorem to obtain a lower bound on $\int_0^T z(t)^T M z(t) \, dt$. A
similar factorization result was obtained in \cite{seiler13} using a
game-theoretic interpretation. The next theorem summarizes the main
factorization result needed in order to use IQCs within the dissipation
inequality framework.

\vspace{0.1in}
\begin{theorem}
  \label{thm:hardiqc}
  Let $\Pi=\Pi^\sim \in \RL^{(m_1+m_2) \times (m_1+m_2)}$ be
  partitioned as $\bsmtx \Pi_{11} & \Pi_{21}^\sim \\ \Pi_{21} &
  \Pi_{22} \esmtx$ where $\Pi_{11} \in \RL^{m_1 \times m_1}$ and
  $\Pi_{22} \in \RL^{m_2 \times m_2}$.  Assume $\Pi_{11}(j\omega) > 0$
  and $\Pi_{22}(j\omega) < 0$ for all $\omega \in \R \cup \{
      \infty \}$. Then $\Pi$ has a hard factorization $(\Psi,M)$.
\end{theorem}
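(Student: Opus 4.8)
The plan is to prove the statement in three steps: (i) show that $\Pi(j\omega)$ has constant inertia $(m_1,0,m_2)$ over $\omega\in\R\cup\{\infty\}$; (ii) invoke canonical (J-spectral) factorization of rational para-Hermitian matrices of constant signature to obtain a factor $\Psi$ that is \emph{bistable}, i.e. $\Psi,\Psi^{-1}\in\RH$; and (iii) verify that this particular factorization is a hard factorization $(\Psi,M)$ of $\Pi$. Steps (i)--(ii) are essentially classical and fix $M$; step (iii) is the crux.

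For (i), fix $\omega\in\R\cup\{\infty\}$ and write $\Pi(j\omega)=\bsmtx \Pi_{11}(j\omega) & \Pi_{21}(j\omega)^* \\ \Pi_{21}(j\omega) & \Pi_{22}(j\omega)\esmtx$ (real symmetric when $\omega=\infty$). Since $\Pi_{11}(j\omega)>0$ is invertible, the block $LDL^*$ congruence shows $\Pi(j\omega)$ is congruent to $\bsmtx \Pi_{11} & 0 \\ 0 & S\esmtx$ with Schur complement $S:=\Pi_{22}-\Pi_{21}\Pi_{11}^{-1}\Pi_{21}^*$; since $\Pi_{21}\Pi_{11}^{-1}\Pi_{21}^*\ge0$ we get $S\le\Pi_{22}<0$, so $S<0$. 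Hence $\Pi(j\omega)$ has exactly $m_1$ positive and $m_2$ negative eigenvalues and is nonsingular, at every $\omega$ including $\infty$. This is exactly where the hypotheses must be \emph{strict}: $\Pi_{11}\ge 0$ or $\Pi_{22}\le 0$ would permit the inertia to change.

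For (ii), $\Pi=\Pi^\sim\in\RL^{(m_1+m_2)\times(m_1+m_2)}$ is then nonsingular on $j\R\cup\{\infty\}$ with constant signature $(m_1,m_2)$, so the classical canonical spectral factorization (Youla's theorem and its state-space/Riccati versions, \cite{youla61,Scherer2004,materassi09}) provides $\Pi=\Psi^\sim M\Psi$ with $M:=\bsmtx I_{m_1} & 0 \\ 0 & -I_{m_2}\esmtx=M^T$ and $\Psi$ a \emph{square} element of $\RH^{(m_1+m_2)\times(m_1+m_2)}$ with $\Psi^{-1}\in\RH^{(m_1+m_2)\times(m_1+m_2)}$; concretely $\Psi$ is built from the stabilizing solution of the algebraic Riccati equation of a minimal realization of $\Pi$, whose existence is ensured by the strict sign conditions (two such constructions are recorded in Appendix~\ref{app:IQCfac}). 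With $n_z=m_1+m_2$ this has the claimed form, so only hardness remains.

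Step (iii) is the main obstacle. Let $\Delta$ be bounded and causal with $\Delta\in\text{IQC}(\Pi)$, take $v\in L_2^{m_1}[0,\infty)$, set $w=\Delta(v)$ (so $w\in L_2$ since $\Delta$ is bounded), $z=\Psi\bsmtx v\\w\esmtx$, and partition $z=\bsmtx z_1\\z_2\esmtx$, so that $z^TMz=|z_1|^2-|z_2|^2$ and the hypothesis reads $\|z_1\|\ge\|z_2\|$; we must improve this to $\|(z_1)_T\|\ge\|(z_2)_T\|$ for every $T\ge 0$. Since $\Psi$ is bistable we may pass to $z$-coordinates, $\bsmtx v\\w\esmtx=\Psi^{-1}\bsmtx z_1\\z_2\esmtx$; fixing $T$, replace $z_1$ by its truncation $(z_1)_T$ and let $\bar z_2$ denote the $z_2$-signal obtained by re-solving the loop, i.e. imposing $\bar w=\Delta(\bar v)$ where $\bsmtx\bar v\\\bar w\esmtx=\Psi^{-1}\bsmtx (z_1)_T\\\bar z_2\esmtx$. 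Causality of $\Psi^{-1}$ and $\Delta$ forces $\bar z_2$ to agree with $z_2$ on $[0,T]$, while by construction $\Psi\bsmtx\bar v\\\bar w\esmtx=\bsmtx (z_1)_T\\\bar z_2\esmtx$ has first component supported in $[0,T]$; applying the infinite-horizon IQC to $(\bar v,\bar w)$ then gives $\|(z_1)_T\|^2\ge\|\bar z_2\|^2\ge\|(\bar z_2)_T\|^2=\|(z_2)_T\|^2$, as required. The substantive difficulty — and the part I expect to be hardest — is making the re-solving step rigorous: one must know the implicit equation for $\bar z_2$ has a well-defined $L_2$ solution, which is precisely where the strictness of the definiteness hypotheses at $\omega=\infty$ (fixing the feedthrough of $\Psi$, hence the well-posedness of the loop, and making the $z_2$-channel of $\Psi$ dominant) is used. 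The clean way to do this without positing well-posedness of the $\Delta$-loop is the min/max argument of \cite{Megretski2010}, respectively the game-theoretic/Riccati argument of \cite{seiler13}, and that is the technical heart of the proof.
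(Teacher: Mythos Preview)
Your proposal is correct and follows essentially the same route as the paper: the sign-definiteness of $\Pi_{11}$ and $\Pi_{22}$ yields a $J_{m_1,m_2}$-spectral factorization with $\Psi,\Psi^{-1}\in\RH$ (your steps (i)--(ii) are exactly what Lemmas~\ref{lem:PIfac} and~\ref{lem:PNiqc} in Appendix~\ref{app:IQCfac} provide, with your Schur-complement inertia argument making explicit what Lemma~\ref{lem:PNiqc} packages), and hardness is then obtained by invoking Theorem~2.4 of \cite{Megretski2010}. The paper's own proof is terser and does not attempt the heuristic ``re-solve the loop'' sketch you give in step~(iii) --- it simply cites \cite{Megretski2010} directly --- but since you yourself identify the well-posedness gap in that sketch and defer to \cite{Megretski2010,seiler13} for the rigorous argument, the two proofs coincide in substance.
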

\begin{proof}
  The sign definite conditions on $\Pi_{11}$ and $\Pi_{22}$ ensure
  that $\Pi$ has a factorization $(M,\Psi)$ where $\Psi$ is square and
  both $\Psi, \Psi^{-1}$ are stable.  This follows from
  Lemmas~\ref{lem:PIfac} and \ref{lem:PNiqc} in
  Appendix~\ref{app:IQCfac}.  Moreover, Appendix~\ref{app:IQCfac}
  provides a numerical algorithm to compute this special (J-spectral)
  factorization using state-space methods.  The conclusion that
  $(M,\Psi)$ is a hard factorization follows from Theorem 2.4 in
  \cite{Megretski2010}.
\end{proof}

\subsection{Analysis of Nonlinear Delayed Systems}
\label{sec:nlstab}

This section derives analysis conditions for the nonlinear delayed
system $F_u(G,\St)$ shown in \cref{fig:Stfeedback} using dissipation
inequalities.  For concreteness the discussion focuses on $\St$
defined with constant delays but the results also hold using IQCs
valid for $\Str$ defined with varying delays.  Assume $\St$ satisfies
the IQC defined by $\Pi$ and, in addition, $\Pi$ has a hard
factorization $(\Psi,M)$.  The feedback system can be analyzed using
the interconnection structure shown in \cref{fig:analysisic} with the
system $\Psi$ appended to the input/output channels of $\St$.  The
dynamics of the interconnection in \cref{fig:analysisic} involve an
extended system of the form
\begin{equation}
  \label{eq:nlextsys}
  \begin{split}
    \dot{x} & :=  F(x,w,d) \\
    \bmtx z \\ e \\ \emtx & = H(x,w,d) \\
  \end{split}
\end{equation}
$x:=\bsmtx x_G \\ \psi \esmtx \in \R^{n_G+n_\psi}$ is the extended
state and the functions $F$ and $H$ can be easily determined from the
dynamics of $G$ and $\Psi$.  The theorem below provides a sufficient
condition for the feedback interconnection to have an induced $L_2$
gain from $d$ to $e$ that is less than or equal to $\gamma$. The
theorem is based on a dissipation inequality that uses both the hard
IQC associated with $\St$ as well as a storage function $V$ defined on
the extended state $x$.  The system $\St$ is shown as a dashed box in
\cref{fig:analysisic} because the analysis essentially replaces the
precise relation $w=\St(v)$ with the hard IQC constraint on $z$ that
specifies the signals pairs $(v,w)$ that are consistent with the
behavior of $\St$.

\begin{figure}[h]
\centering
\scalebox{1.0}{
\begin{picture}(160,125)(-10,20)
 \thicklines
 \put(75,25){\framebox(40,40){$G$}}
 \put(158,32){$e$}
 \put(115,35){\vector(1,0){35}}  
 \put(28,32){$d$}
 \put(40,35){\vector(1,0){35}}  
 \put(80,80){\dashbox(30,30){$\St$}}
 \put(42,75){$w$}
 \put(55,55){\vector(1,0){20}}  
 \put(55,55){\line(0,1){40}}  
 \put(55,95){\line(1,0){25}}  
 \put(145,75){$v$}
 \put(135,95){\vector(-1,0){25}}  
 \put(135,55){\line(0,1){40}}  
 \put(115,55){\line(1,0){20}}  
 \put(67,95){\line(0,1){30}}
 \put(67,125){\vector(-1,0){17}}
 \put(123,95){\line(0,1){40}}
 \put(123,135){\vector(-1,0){73}}
 \put(20,115){\framebox(30,30){$\Psi$}}
 \put(5,135){$z$}
 \put(20,130){\vector(-1,0){30}}
\end{picture}
} 
\caption{Analysis Interconnection Structure}
\label{fig:analysisic}
\end{figure}

\vspace{0.1in}
\begin{theorem}
\label{thm:nlstab}
Assume $F_u(G,\St)$ is well-posed and $\St$ satisfies the hard IQC
defined by $(\Psi,M)$. Then $\|F_u(G,\St)\| \le \gamma$ if there
exists a scalar $\lambda \ge 0$ and a continuously differentiable
storage function $V:\R^{n_G+n_\psi} \rightarrow \R$ such that:
\begin{enumerate}
\item[i)] $V(0) = 0$,
\item[ii)] $V(x) \ge 0$ \ $\forall x \in \R^{n_G+n_\psi}$, 
\item[iii)] The following dissipation inequality holds for all 
  $x \in \R^{n_G+n_\psi}, w \in \R^{n_v}, d \in\R^{n_d}$ 
\begin{align}
    \label{eq:distab}
    & \lambda z^TMz + \nabla V(x) \cdot F(x,w,d) 
    \le \gamma^2 d^Td - e^Te 
  \end{align}
  where $z$ and $e$ are functions of $(x,w,d)$ as defined by $H$
  in Equation~\ref{eq:nlextsys}.
\end{enumerate}
\end{theorem}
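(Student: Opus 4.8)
The plan is to run the standard dissipation‑inequality argument: evaluate condition (iii) along closed‑loop trajectories, absorb the quadratic IQC term using the hard‑factorization property, and thereby bound $\|e\|$ by $\gamma\|d\|$. Concretely, I would fix an arbitrary $d\in L_2^{n_d}[0,\infty)$ and set $x_G(0)=0$; since $F_u(G,\St)$ is well posed, the analysis interconnection of \cref{fig:analysisic} produces signals $v,w,z,e$ with $w(t)=\St(v)(t)$, and since $\psi(0)=0$ the extended initial state is $x(0)=0$, so $V(x(0))=0$ by (i). Because (iii) holds for all $(x,w,d)$, it holds along the realized trajectory $(x(t),w(t),d(t))$ of \cref{eq:nlextsys}; using $\nabla V(x(t))\cdot F(x(t),w(t),d(t))=\frac{d}{dt}V(x(t))$ from the chain rule and the state equation, integrating (iii) over $[0,T]$ gives
\begin{align}
  & \lambda\int_0^T z^T M z\,dt + V(x(T)) - V(x(0)) \nonumber \\
  & \qquad \le \gamma^2\int_0^T d^T d\,dt - \int_0^T e^T e\,dt .
\end{align}

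Next I would discard the two nonnegative terms on the left: (ii) gives $V(x(T))\ge 0$, and $V(x(0))=0$; and since $z$ here is exactly the response of $\Psi$ to the pair $(v,w=\St(v))$, the hypothesis that $\St$ satisfies the \emph{hard} IQC defined by $(\Psi,M)$ yields $\int_0^T z^T M z\,dt\ge 0$ for every $T\ge 0$ by the definition of a hard IQC factorization. Hence, with $\lambda\ge 0$, the displayed inequality reduces to $\int_0^T e^T e\,dt\le\gamma^2\int_0^T d^T d\,dt\le\gamma^2\|d\|^2$. As this bound is uniform in $T$, letting $T\to\infty$ shows $e\in L_2^{n_e}[0,\infty)$ and $\|e\|\le\gamma\|d\|$; taking the supremum over $0\ne d$ gives $\|F_u(G,\St)\|\le\gamma$.

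The delicate point — and the main obstacle — is invoking the hard IQC, because its definition quantifies over $L_2$ inputs of $\St$ while well-posedness a priori only places the internal signals $v,w$ in the extended space $L_{2e}$ (showing $v,w\in L_2$ is essentially the conclusion being sought). I would handle this by truncation: by causality of the constant delay $\St$ and of the stable filter $\Psi$, on $[0,T]$ the signal $z$ coincides with the response of $\Psi$ to the pair $(v_T,\St(v_T))$, and $v_T\in L_2$, so applying the hard IQC property to the genuine $L_2$ input $v_T$ delivers $\int_0^T z^T M z\,dt\ge 0$ exactly as used above. (Alternatively, a homotopy/continuity argument over a scaling of the interconnection, as is standard in the IQC literature, establishes $v,w\in L_2$ first.) The remaining ingredients are routine: $t\mapsto x(t)$ is absolutely continuous because it solves \cref{eq:nlextsys} with $L_{2e}$ forcing, so the fundamental theorem of calculus applies to the continuously differentiable $V$, and the pointwise substitution of the trajectory into (iii) is valid for almost every $t$.
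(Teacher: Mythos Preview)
Your proof is correct and follows essentially the same approach as the paper: integrate the dissipation inequality from $0$ to $T$, drop $V(x(T))\ge 0$ and $\lambda\int_0^T z^TMz\,dt\ge 0$ using conditions (ii) and the hard IQC, and conclude the gain bound. In fact you are more careful than the paper on the $L_2$ versus $L_{2e}$ issue for invoking the hard IQC; your truncation argument via causality of $\St$ and $\Psi$ is exactly the right way to close that gap, which the paper's proof simply elides.
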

\begin{proof}
  Let $d \in L_2^{n_d}[0,\infty)$ be any input signal. From
  well-posedness of the interconnection, the interconnection
  $F_u(G,\St)$ has a solution that satisfies the dynamics in
  Equation~\ref{eq:nlextsys}.  The dissipation inequality
  (Equation~\ref{eq:distab}) can be integrated from $t=0$ to $t=T$
  with the initial condition $x(0) = 0$ to yield:
  \begin{align}
    \lambda \int_0^T z(t)^TMz(t) \, dt + V\left(x(T)\right) 
    \le & \gamma^2 \int_0^T d(t)^T d(t) \, dt \\
    \nonumber
    & - \int_0^T e(t)^T e(t) \, dt 
  \end{align}  
  It follows from the hard IQC condition, $\lambda \ge 0$, and the
  non-negativity of the storage function $V$ that
  \begin{align}
    \label{eq:bndedgain}
    \int_0^T e(t)^T e(t) \, dt \le \gamma^2
    \int_0^T d(t)^T d(t) \, dt
  \end{align}
  Hence $\|F_u(G,\St)\| \le \gamma$. 
\end{proof}
\vspace{0.1in}

The dissipation inequality (Equation~\ref{eq:distab}) is an algebraic
constraint on variables $(x,w,d)$.  The dissipation inequality only
depends on $\St$ via the constraint on $z^TMz$. Thus the dependence of
the dissipation inequality on the delay value $\tau$ appears through
the choice of the multiplier $\Pi$.  Specifically, $\Pi$ typically
depends on the value of $\tau$, e.g. $\Pi_2$ and $\Pi_3$ defined
previously. The delay value is selected and then the multiplier $\Pi$
and its hard factorization $(\Psi,M)$ are constructed.  Thus for a
given delay value $\tau$, \cref{thm:nlstab} provides convex conditions
on $V$, $\lambda$, and $\gamma$ that are sufficient to upper bound the
$L_2$ gain of $F_u(G,\St)$.

This leads to a useful numerical procedure if additional assumptions
are made on $G$. If the dynamics of $G$ (Equation~\ref{eq:G}) are
described by polynomial vector fields then the functions $F$ and $H$
in the extended system are also polynomials.  If the storage function
$V$ is also restricted to be polynomial then the dissipation
inequality (Equation~\ref{eq:distab}) and non-negativity condition $V
\ge 0$ are simply global polynomial constraints. In this case the
search for a feasible storage function $V$ and scalars $\lambda$,
$\gamma$ can be formulated as a sum-of-squares (SOS) optimization
\cite{parrilo00,parrilo03,lasserre01}.  For fixed delay $\tau$ this
yields a convex optimization to upper bound the $L_2$ gain of
$F_u(G,\St)$.  In addition, bisection can be used to find the largest
delay $\bar{\tau}$ such that the gain from $d$ to $e$ remains
finite.  If the QC multiplier $\Pi$ covers $\St$ for all $\tau \in
[0,\bar \tau]$ then $\bar \tau$ is a lower bound on the true delay
margin. It is a lower bound because the dissipation inequality is only
a sufficient condition.  An example of this SOS method is given in
\cref{sec:nlexample}.  One issue with the SOS approach is that the
required computation grows rapidly with the degree and number of
variables contained in the polynomial constraint.  This currently
limits the proposed approach to situations where the extended system
roughly involves a cubic vector field and state dimension less than
$7-10$.

It should also be noted that the multiple IQCs can be incorporated in
the analysis.  Specifically, assume $\St$ satisfies the hard IQCs
defined by $(\Psi_k,M_k)$ for $k=1,\cdots,N$.  Each $\Psi_k$ can be
appended to the inputs/outputs of $\St$ to yield a filtered output
$z_k$. \cref{thm:nlstab} remains valid if the dissipation inequality
(Equation~\ref{eq:distab}) is modified to include the term
$\sum_{k=1}^N \lambda_k z_k^T M_k z_k$ for any constants $\lambda_k
\ge 0$.  In this case the extended system includes the dynamics of $G$
as well as the dynamics of each $\Psi_k$ ($k=1,\cdots,N$).  The
stability analysis consists of a search for the storage $V$, gain
bound $\gamma$, and the constants $\lambda_k$ that lead to feasibility
of the three conditions in Theorem~\ref{thm:nlstab}.  This approach
enables many IQCs for $\St$ to be incorporated into the analysis.  



\subsection{Analysis of Parameter-Varying Delayed Systems}
\label{sec:lpvstab}

A similar analysis condition can be derived for parameter-varying
delayed systems. In particular, Linear Parameter Varying (LPV) systems
are a class of linear systems whose state space matrices depend on a
time-varying parameter vector $\rho: \R^+ \rightarrow \R^{n_\rho}$.
The parameter is assumed to be a continuously differentiable function
of time and admissible trajectories are restricted, based on physical
considerations, to a known compact subset $\mathcal{P} \subset
\R^{n_\rho}$.  The state-space matrices of an LPV system are
continuous functions of the parameter, e.g. $A_G: \mathcal{P}
\rightarrow \R^{n_x \times n_x}$.  Define the LPV system $G_\rho$ with
inputs $(w,d)$ and outputs $(v,e)$ as:
\begin{equation}
  \label{eq:lpv}
  \begin{split}
    \dot{x}_G(t) &= A_G(\rho(t)) x_G(t) 
             + B_G(\rho(t))  \bsmtx w(t) \\ d(t) \esmtx \\
    \bsmtx v(t) \\ e(t) \esmtx & = C_G(\rho(t)) x_G(t) 
             + D_G(\rho(t))  \bsmtx w(t) \\ d(t) \esmtx \\
  \end{split}
\end{equation}
The state matrices at time $t$ depend on the parameter vector at time
$t$. Hence, LPV systems represent a special class of time-varying
systems.  Throughout this section the explicit dependence on $t$ is
occasionally suppressed to shorten the notation.

By loop-shifting, a delayed LPV system can be modeled as
$F_u(G_\rho,\St)$ where $w=\St(v)$. This is similar to the
interconnection shown in Figure~\ref{fig:Stfeedback} but with $G_\rho$
as the ``nominal'' system.  As a slight abuse of notation,
$\|F_u(G_\rho,\St)\|$ will denote the worst-case $L_2$ gain over all
allowable parameter trajectories:
\begin{equation}
  \label{eq:lpvgain}
  \| F_u(G_\rho, \St) \| = \sup_{\rho \in \mathcal{P}}  \ 
  \sup_{0\ne d \in L_2^{n_d}[0,\infty),  \ x_G(0)=0}  \frac{\|e\|}{\|d\|}
\end{equation}
As in the previous subsection, assume $\St$ satisfies the IQC defined
by $\Pi$ and that $\Pi$ has a hard factorization $(\Psi,M)$. The
stability and performance of $F_u(G_\rho,\St)$ can be assessed by
appending the system $\Psi$ to the input/output channels of $\St$ (as
in \cref{fig:analysisic}).  The dynamics of the interconnection in
\cref{fig:analysisic} depend on an extended LPV system of the
form:
\begin{equation}
\label{eq:lpvic}
\begin{split}
 \dot{x} &= A(\rho)  x  + B_1(\rho) w + B_2(\rho) d \\
z &=  C_1(\rho)  x  + D_{11}(\rho)w + D_{12}(\rho)d \\
e &=  C_2(\rho)  x  + D_{21}(\rho)w + D_{22}(\rho)d, \\
\end{split}
\end{equation}
where the state vector is $x:=\bsmtx x_G \\ \psi \esmtx \in
\R^{n_G+n_\psi}$ with $x_G$ and $\psi$ denoting the state vectors of the
LPV system $G_\rho$ (Equation~\ref{eq:lpv}) and the filter $\Psi$
(Equation~\ref{eq:ltipsi}), respectively.  A dissipation inequality
can be formulated to upper bound the worst-case $L_2$ gain of
$F_u(G_\rho,\St)$ using the system \cref{eq:lpvic} and the time domain
IQC \cref{eq:hardIQCdef}. This dissipation inequality is concretely
expressed as a linear matrix inequality in the following theorem. The
theorem is stated assuming a single multiplier $\Pi$ for $\St$ but
many IQC multipliers can be included as described in the previous
section.

\vspace{0.1in}
\begin{theorem}
\label{thm:lpvstab}
Assume $F_u(G_\rho,\St)$ is well posed and $\St$ satisfies the hard
IQC defined by $(\Psi,M)$.  Then $\|F_u(G_\rho,\St)\| \le \gamma$ if
there exists a scalar $\lambda \ge 0$ and a matrix $P=P^T \in
\R^{n_x+n_\psi}$ such that $P\ge 0$ and for all $\rho \in \mathcal{P}$:
\begin{equation}
  \label{eq:brllmi}
  \begin{split}
    \bmtx A^T P + PA & P B_1 & P B_2 \\ B_1^TP & 0 & 0 
    \\ B_2^TP & 0 & -\gamma^2 I \emtx 
    + \bmtx C_2^T \\ D_{21}^T \\ D_{22}^T \emtx 
    \bmtx C_2 & D_{21} & D_{22} \emtx \\
    + \lambda \bmtx C_1^T \\ D_{11}^T \\ D_{12}^T \emtx 
    M \bmtx C_1 & D_{11} & D_{12} \emtx < 0 \\
  \end{split}
\end{equation}
\end{theorem}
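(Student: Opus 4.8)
The plan is to imitate the dissipation argument used for \cref{thm:nlstab}, specializing the storage function to the quadratic form $V(x):=x^{T}Px$ and observing that the matrix inequality \cref{eq:brllmi} is nothing but the pointwise version of the dissipation inequality \cref{eq:distab} written out for the extended LPV system \cref{eq:lpvic}. First I would verify the analogues of conditions (i) and (ii) in \cref{thm:nlstab}: since $P=P^{T}\ge 0$ we immediately get $V(0)=0$ and $V(x)\ge 0$ for all $x\in\R^{n_G+n_\psi}$.

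Next I would perform the congruence step. Fix $\rho\in\mathcal P$ and an arbitrary triple $(x,w,d)$, and let $z$ and $e$ be the corresponding outputs in \cref{eq:lpvic}. Pre- and post-multiplying \cref{eq:brllmi} by $\bsmtx x \\ w \\ d \esmtx^{T}$ and $\bsmtx x \\ w \\ d \esmtx$ and using $\dot x=A(\rho)x+B_{1}(\rho)w+B_{2}(\rho)d$, the first block evaluates to $\dot x^{T}Px+x^{T}P\dot x=\tfrac{d}{dt}\!\left(x^{T}Px\right)=\nabla V(x)\cdot\dot x$, the two rank-structured terms evaluate to $e^{T}e$ and $\lambda\,z^{T}Mz$, and the $-\gamma^{2}I$ block contributes $-\gamma^{2}d^{T}d$. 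Hence \cref{eq:brllmi} is equivalent to
\begin{align}
  \lambda\,z^{T}Mz+\nabla V(x)\cdot\dot x \;\le\; \gamma^{2}d^{T}d-e^{T}e
\end{align}
holding for all $\rho\in\mathcal P$ and all $(x,w,d)$, which is exactly the dissipation inequality \cref{eq:distab} specialized to the present system.

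Then I would integrate along trajectories exactly as in the proof of \cref{thm:nlstab}. For an admissible parameter trajectory $\rho$ and any $d\in L_{2}^{n_d}[0,\infty)$, well-posedness gives a solution of \cref{eq:lpvic} with $x(0)=\bsmtx x_G(0)\\ \psi(0)\esmtx=0$; evaluating the inequality above at $\bigl(\rho(t),x(t),w(t),d(t)\bigr)$ — which is legitimate because $\rho(t)\in\mathcal P$ at every $t$ — and integrating from $0$ to $T$ yields
\begin{align}
  \lambda\!\int_{0}^{T}\! z^{T}Mz\,dt + x(T)^{T}Px(T) \;\le\; \gamma^{2}\!\int_{0}^{T}\! d^{T}d\,dt - \int_{0}^{T}\! e^{T}e\,dt .
\end{align}
Since $(\Psi,M)$ is a hard IQC factorization of $\Pi$ and $\St\in\text{IQC}(\Pi)$, \cref{eq:hardIQCdef} gives $\int_{0}^{T} z^{T}Mz\,dt\ge 0$; combined with $\lambda\ge 0$ and $x(T)^{T}Px(T)\ge 0$ this leaves $\int_{0}^{T} e^{T}e\,dt\le\gamma^{2}\int_{0}^{T} d^{T}d\,dt$. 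Letting $T\to\infty$ gives $\|e\|\le\gamma\|d\|$, and taking the supremum over $d$ and over $\rho\in\mathcal P$ yields $\|F_u(G_\rho,\St)\|\le\gamma$.

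The step I expect to be the crux — and the reason the hypotheses are arranged as they are — is keeping the storage matrix $P$ \emph{constant} (parameter-independent): because $V=x^{T}Px$ carries no dependence on $\rho$, differentiating $V$ along a trajectory produces no $\dot\rho$ term, so the pointwise-in-$\rho$ LMI transfers verbatim to the time-varying interconnection without any rate bound on the parameter. A minor technical point is the passage from the strict LMI to the non-strict integrated bound, which is harmless since an integrand that is $\le 0$ everywhere integrates to something $\le 0$; if one insists on a strictly smaller gain, compactness of $\mathcal P$ together with continuity of $A_G,B_G,C_G,D_G$ supplies a uniform margin of negative definiteness.
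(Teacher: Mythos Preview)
Your proposal is correct and follows essentially the same approach as the paper: define the quadratic storage $V(x)=x^{T}Px$, left- and right-multiply the LMI \cref{eq:brllmi} by $[x^{T},w^{T},d^{T}]$ and its transpose to recover the dissipation inequality \cref{eq:distab2}, then integrate and invoke the hard IQC and $P\ge 0$ exactly as in the proof of \cref{thm:nlstab}. Your additional remarks on why a constant $P$ avoids any $\dot\rho$ term and on the strict-versus-nonstrict passage are correct elaborations, but the core argument is identical to the paper's.
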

In \cref{eq:brllmi} the dependency of the state space matrices on $\rho$
has been omitted to shorten the notation.
\begin{proof}
  Define a storage function $V:R^{n_x+n_\psi} \rightarrow
  \mathbb{R}^+$ by $V(x) = x^TPx$.  Left and right multiply
  \cref{eq:brllmi} by $[x^T, w^T,d^T]$ and $[x^T, w^T, d^T]^T$ to show
  that $V$ satisfies the dissipation inequality:
  \begin{align}
    \label{eq:distab2}
    & \lambda z(t)^TMz(t) + \dot{V}(t)  \le \gamma^2 d(t)^Td(t) - e(t)^Te(t) 
  \end{align}
  The remainder of the proof is similar to that given for \cref{thm:nlstab}.
\end{proof}
\vspace{0.1in}


The analysis of the delayed LPV system reduces to a set of parameter
dependent LMIs.  These are infinite dimensional and hence they are
typically approximated by finite-dimensional LMIs evaluated on a grid
of parameter values.  In this case the search for the matrix $P$ and
scalars $\lambda$, $\gamma$ can be performed as a semidefinite
programming optimization \cite{befb94}.  If the LPV system has a
rational dependence on the parameters then a finite dimensional LMI
condition can be derived (with no approximation) using the techniques
in \cite{Packard1994,Apkarian1995}. In addition, there may be known
bounds on the parameter rates of variation.  \cref{thm:lpvstab} does
not incorporate such knowledge and hence this is called a
rate-unbounded analysis condition. \cref{thm:lpvstab} can be easily
extended to include rate-bounds using parameter-dependent storage
functions as described in \cite{Wu1996}.  Finally, if the system
$G_\rho$ is LTI then Equation~\ref{eq:distab2} reduces to a single LMI
condition.  The nonlinear dissipation inequality in \cref{thm:nlstab}
is equivalent to exactly the same LMI condition when $G$ is LTI
and $V$ is a quadratic function of $x$. In other words,
Theorems~\ref{thm:nlstab} and \ref{thm:lpvstab} are the same for
LTI dynamics and quadratic storage functions.

%
%

\section{Numerical Examples}
\label{sec:examples}

This section presents numerical examples to assess the stability and
performance for nonlinear and LPV delayed systems.

\subsection{Nonlinear Delayed System}
\label{sec:nlexample}

Consider the classical feedback loop shown in \cref{fig:cloop} where
$\Dt$ is a constant delay.  $L$ is a nonlinear system described by the
following ODE:
\begin{equation}
\label{eq:Lex}
\begin{split}
\dot{x}_G & = \bmtx -49 & 0 \\ 1 & 0 \emtx x_G  
    + \bmtx 8 \\ 0 \emtx \tilde{w} 
    + p(x_G) \\
y &= \bmtx -4.5 & 1.5 \emtx x_G
\end{split}
\end{equation}
where $p(x_G):= \bmtx 2 x_{G,1}^2 + 3 x_{G,2}^2 - 0.2 x_{G,1}^3, \
-x_{G,2}^3 \emtx^T$.  As described in \cref{sec:probform}, a
loop-shift can be performed to bring the classical feedback loop into
the form $F_u(G,\St)$ shown in \cref{fig:Stfeedback}.

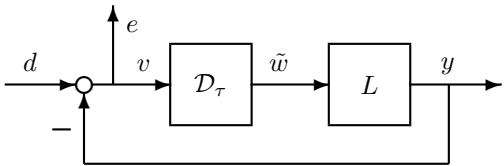
\begin{figure}[htbp]
\centering
\scalebox{1.0}{
\begin{picture}(203,60)(20,20)
 \thicklines 
 \put(20,50){\vector(1,0){27}}
 \put(27,55){$d$}
 \put(38,33){\line(1,0){7}}
 \put(50,50){\circle{6}}
 \put(53,50){\vector(1,0){30}}
 \put(70,55){$v$}
 \put(61,50){\vector(0,1){30}}
 \put(66,70){$e$}
 \put(83,35){\framebox(30,30){$\Dt$}}
 \put(113,50){\vector(1,0){30}}
 \put(120,55){$\tilde w$}
 \put(143,35){\framebox(30,30){$L$}}
 \put(173,50){\vector(1,0){35}}
 \put(188,50){\line(0,-1){30}}
 \put(188,20){\line(-1,0){138}}
 \put(50,20){\vector(0,1){27}} 
 \put(185,56){$y$}
\end{picture}
} 
\caption{Classical Feedback Loop}
\label{fig:cloop}
\end{figure}


First, a linear analysis is performed to gain insight into the use of
QCs for analysis.  Let $G_{lin}$ denote the linearization of $G$
around $x_G=0$.  This linearization is obtained by neglecting the
higher order terms in \cref{eq:Lex}, i.e. neglecting $p(x_G)$.  An
estimate for the delay margin of $F_u(G_{lin},\St)$ can be computed
from the LMI condition in \cref{thm:lpvstab}. Bisection is used to
find the largest time delay for which the gain from $d$ to $e$ is
finite. A number of solvers exist for this class of systems and here
the Matlab \texttt{LMILab} toolbox was used.  Using the standard
multipliers $\Pi_1$ and $\bar{\Pi}_2$ gives a delay margin of 0.06sec.
However, using the multipliers $\Pi_1$ and the new ``small'' circle
multiplier $\bar{\Pi}_3$ yields a significantly larger delay margin
estimate of 1.96sec.  The exact delay margin for $F_u(G_{lin},\St)$
can be estimated from the frequency response of $G_{lin}$.  The
frequency response gives a delay margin of 2.05 sec with a critical
frequency 0.361 rad/sec.  These results are summarized in
\cref{tab:nldm}.

\begin{table}[h]
  \centering
  \begin{tabular}{llr} 
    \hline
    System & Method & Delay Margin \\ 
    \hline
    Linear, $G_{lin}$  & 
          IQC with $\Pi_1$ and $\bar \Pi_2$ & 0.06sec \\
    Linear, $G_{lin}$  & 
          IQC with $\Pi_1$ and $\bar \Pi_3$ & 1.96sec \\
    Linear, $G_{lin}$ & Frequency Response & 2.05 sec \\ 
    \hline \hline
    Nonlinear, $G$   
         &  IQC with $\Pi_1$ and $\bar \Pi_3$ & 1.09sec \\
    \hline
  \end{tabular}
  \vspace{0.1in}
  \caption{Summary of Delay Margins} \label{tab:nldm}
\end{table}

The results can be interpreted in the complex plane as shown in
\cref{fig:NyquistQCs}.  The figure shows the circle constraints
described by $\Pi_1$ and $\bar{\Pi}_3$ at the critical frequency 0.361
rad/sec (dotted blue). The figure also shows the circle described by
the optimal multiplier $\Pi_{opt} := \lambda_1 \Pi_1 + \lambda_3
\bar{\Pi}_3$ (solid blue). The coefficients in the optimal conic
combination are given by $\lambda_1 = 5310$ and $\lambda_3 = 7210$.
Finally, the figure shows the location of $\hatSt(j\omega) =
e^{-j\omega \tau} -1$ evaluated at the critical frequency and delay
margin estimate of $\tau=1.96$ sec (small red circle).  The location
of $\hatSt$ lies inside the circle described by $\Pi_{opt}$ at the
critical frequency. In fact, $\hatSt$ lies inside the circle described
by $\Pi_{opt}$ at all frequencies.  This is consistent with the
following rough frequency-domain interpretation of the dissipation
inequality result: $F_u(G,\Delta)$ is stable for LTI perturbations
$\Delta$ whose Nyquist plot lies within the circles described
$\Pi_{opt}$.

\begin{figure}[ht!]
  \centering
    {\input{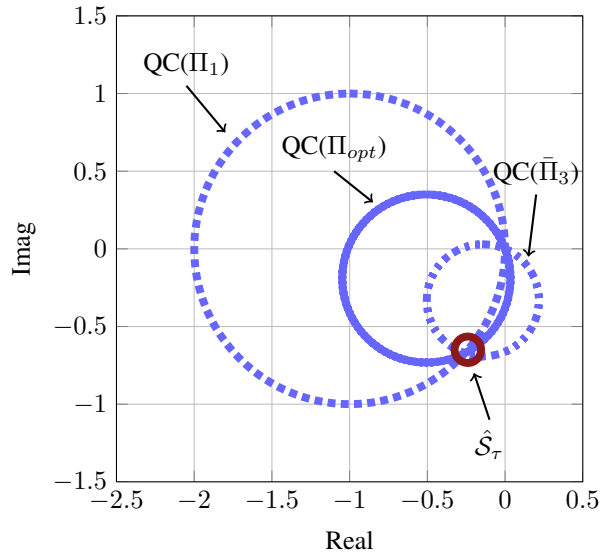}}
  \caption{Nyquist plot showing quadratic constraints.}
  \label{fig:NyquistQCs}
\end{figure}

This frequency domain interpretation can also be used to explain the
poor delay margin bound obtained with $\Pi_1$ and $\bar{\Pi}_2$.
Briefly, the transfer function from $w$ to $v$ (upper left corner of $G_{lin}$) is given by $-T=\frac{-L_{lin}(s)}{1+L_{lin}(s)}$, i.e.  the negative
complementary sensitivity function. Standard robust control techniques
\cite{Skogestad2005} can be used to show that
$\Delta:=\frac{-1}{T(j\omega_0)}$ is a complex perturbation that
causes $F_u(G_{lin},\Delta)$ to have poles at $\pm j\omega_0$. At high frequencies these ``small'' destabilizing
perturbations follow a vertical line in the complex plane with real
part equal to 0.37. This constrains the use of
$\bar{\Pi}_2$ because this multiplier defines a circle centered at the
origin with radius greater than $\hatSt$.  The destabilizing
perturbations lie within this circle unless the time delay is chosen
to be small. This issue can be eliminated by frequency weighting the
multiplier $\bar{\Pi}_2$, e.g. weighting each entry by $\left|
  \frac{s+1}{10s+1} \right|^2$.  However, this increases the state
order of the multiplier and hence the computation time.

Next consider the analysis of the delayed nonlinear system.  An
estimate for the delay margin of $F_u(G,\St)$ can be computed from the
dissipation inequality in \cref{thm:nlstab}. 
A number of solvers exist for the corresponding sum-of-squares optimizations.
Here the SOSOPT toolbox \cite{sosopt13} was used for all computations.
Using the multipliers $\Pi_1$ and $\bar{\Pi}_3$ yields a delay margin
of $\bar{\tau}_{NL}=1.09$ sec for the nonlinear system. Thus the
nonlinearities significantly degrade the delay margin bound.  The
bisection was initialized with the bounds $\bar{\tau}_{NL} \in
[0,5]$. The computation took 5.2 secs to perform 13 bisection steps
to achieve a tolerance of 0.001.

\begin{figure}[ht!]
  \centering
  {
%
%
%
\definecolor{mycolor1}{rgb}{0.4,0.4,1}%
\definecolor{mycolor2}{rgb}{0.546875,0.09765625,0.09765625}%
\begin{tikzpicture}

\begin{axis}[%
width=.8\columnwidth,
height=0.6\columnwidth,
scale only axis,
xmin=0,
xmax=2.04514019564538,
xlabel={Time Delay [s]},
ymin=0,
ymax=38.2942995679458,
ylabel={L2 gain[-]}
]
\addplot [
color=lightgray!80!black,
solid,
line width=3.0pt,
forget plot
]
table[row sep=crcr]{
0.019586181640625 3.82942995679458\\
0.120609644839638 4.04756368511881\\
0.221633108038651 4.27862623117196\\
0.322656571237664 4.53206101271388\\
0.423680034436678 4.81922399067115\\
0.524703497635691 5.1456662281756\\
0.625726960834704 5.50512429255725\\
0.726750424033717 5.95734509887653\\
0.82777388723273 6.41480130224412\\
0.928797350431743 7.03944846621021\\
1.02982081363076 7.76224588981843\\
1.13084427682977 8.63159006226498\\
1.23186774002878 9.69916365376377\\
1.3328912032278 10.9864386323379\\
1.43391466642681 12.3752753524223\\
1.53493812962582 15.1325533528799\\
1.63596159282484 19.415576276356\\
1.73698505602385 23.1942706458788\\
1.83800851922286 31.9590804796179\\
1.93903198242187 68.5352856747063\\
};
\addplot [
color=mycolor1,
dashed,
line width=3.0pt,
forget plot
]
table[row sep=crcr]{
0.019586181640625 4.02279181162868\\
0.120609644839638 4.19250315121006\\
0.221633108038651 4.4604429004154\\
0.322656571237664 4.77509232934449\\
0.423680034436678 5.08665510401283\\
0.524703497635691 5.40797997564419\\
0.625726960834704 5.77733635578066\\
0.726750424033717 6.19711605942828\\
0.82777388723273 6.72491114019019\\
0.928797350431743 7.38975818222635\\
1.02982081363076 8.21073331553566\\
1.13084427682977 9.23654208190632\\
1.23186774002878 10.5372865651836\\
1.3328912032278 12.2539507408787\\
1.43391466642681 14.6426367664813\\
1.53493812962582 18.1707638010397\\
1.63596159282484 23.8914220899939\\
1.73698505602385 34.8015447296297\\
1.83800851922286 63.9551544936924\\
1.93903198242187 385.363166793634\\
};
\addplot [
color=mycolor2,
dash pattern=on 1pt off 3pt on 3pt off 3pt,
line width=3.0pt,
forget plot
]
table[row sep=crcr]{
0.01090087890625 5.22472701314652\\
0.06712646484375 5.45902901345451\\
0.12335205078125 6.01519805928224\\
0.17957763671875 7.05720128492196\\
0.23580322265625 8.41198711363584\\
0.29202880859375 9.89330083627091\\
0.34825439453125 11.4758486473164\\
0.40447998046875 13.1871157740697\\
0.46070556640625 15.0778421498112\\
0.51693115234375 17.2200314657351\\
0.57315673828125 19.7147813056699\\
0.62938232421875 22.7091835228467\\
0.68560791015625 26.4283806499103\\
0.74183349609375 31.2389131524583\\
0.79805908203125 37.7849486244559\\
0.85428466796875 47.3183615967144\\
0.91051025390625 62.6444421196942\\
0.96673583984375 91.6282017250966\\
1.02296142578125 168.004201545399\\
1.07918701171875 950.774552425732\\
};

\draw[thick, ->] (40, 32) node[above] {NonLin-IQC} -- (65, 29) ;

\draw[thick, ->] (140, 25) node[left] {Lin-IQC} -- (160, 25) ;

\draw[thick, ->] (150, 7) node[below] {Lin-Freq. Resp.} -- (145, 11) ;

\end{axis}

\end{tikzpicture}
  \caption{Induced $L_2$ gain versus delay}
  \label{fig:NLgain}
\end{figure}
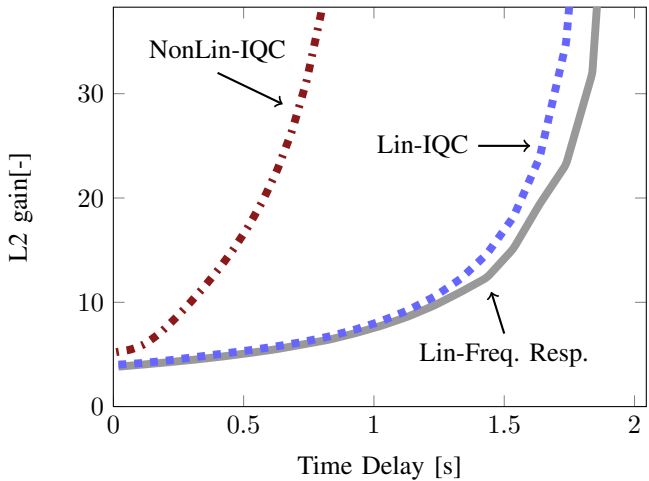

The $L_2$ gain of the delayed nonlinear system from reference $d$ to
tracking error $e$ can also be computed from the dissipation
inequality in \cref{thm:nlstab}.  Figure~\ref{fig:NLgain} shows the
gain of the nonlinear system for delays below the delay margin
estimate, $\tau < \bar{\tau}_{NL}=1.09$ sec (green dash-dot).  The
multipliers $\Pi_1$ and $\bar{\Pi}_3$ were used to compute this curve.
It took 7.5 sec to evaluate the gain on a grid of 20 delay
values. For comparison the figure also shows the gain of the linear
system $F_u(G_{lin},\St)$ computed using two methods. The red dashed
curve is the gain computed using the LMI condition in
\cref{thm:lpvstab} also with multipliers $\Pi_1$ and $\bar{\Pi}_3$.
The blue solid curve is the true induced $L_2$ gain of the linear
system estimated from the frequency response of
$F_u(G_{lin},\St)$. The two linear results are close which provides
confidence in the upper bounds computed via the IQC and dissipation
inequality condition.  The figure also shows that the nonlinear
delayed system has significantly larger gain as compared to the
linearized system.  This again indicates that the nonlinearities
significantly degrade the performance.

\subsection{LPV Delayed System}

The second example is an LPV time-delayed system representing a cutter
used in milling. The system $G_\rho$, taken from \cite{Zhang2002}, can be
written as
\begin{equation}
\begin{split}
\dot{x}_G = &\bmtx 0 & 0 & 1 & 0 \\ 0 & 0 & 0 & 1 \\ -(10+0.171 k) + 0.5 k\rho & 10 & 0 & 0 \\ 5 & -15 & 0 & -0.25 \emtx x_G  \\
 & + \bmtx 0 & 0 & 0 & 0 \\ 0 & 0 & 0 & 0\\ 0.171 k - 0.5 k\rho & 0 & 0 & 0\\ 0 & 0 & 0 & 0 \emtx \tilde w \\
\tilde w = &\mathcal{D}_\tau(x_G)
\end{split}
\end{equation}
where $k$ is the cutting stiffness and $\rho \in [-1, 1]$ is an artificial scheduling parameter depending on the angular position of the blade. Since $G_\rho$ only depends affinely on the parameter $\rho$, it is sufficient to only consider the vertices, i.e. $\rho = -1$ and $\rho = 1$, for the following analyses. The goal of the benchmark is to find the time delay margin of the plant for a given cutting stiffness $k$. As in the previous example a loop shift is performed to bring the system into the form
$F_u(G_\rho, \St)$.

The results of the analysis are shown in \cref{fig:ex3results}. The
IQC approach is based on the LMI condition \cref{thm:lpvstab} and
solved via bisection using the Matlab \texttt{LMILab} toolbox. The IQC
analysis is performed using the combination of multipliers $\Pi_1$ and
$\bar{\Pi}_2$.  The IQC analysis is also performed using $\Pi_1$ and
$\bar{\Pi}_3$. Finally the method from \cite{Zhang2002} based on
Lyapunov-Krasovskii functionals is shown for comparison with the IQC
approach.  It is shown in \cite{Zhang2002} that the system is stable
independent of the delay for stiffness values $k \leq 0.267$. As can
be seen in the figure, all three methods capture this behavior
well. For higher stiffness values, the IQC method with using
multipliers $\Pi_1$ and $\bar{\Pi}_3$ yields improved margin bounds as
compared to the analysis condition from \cite{Zhang2002}.  It should
be noted that the improved Lyapunov-Krasovskii condition in
\cite{Gu1997} achieves results similar to the best IQC results
obtained with $\Pi_1$ and $\bar{\Pi}_3$.


\begin{figure}[htbp]
\centering 
	{
%
%
%
\definecolor{mycolor1}{rgb}{0.546875,0.09765625,0.09765625}%
\definecolor{mycolor2}{rgb}{0.4,0.4,1}%
\begin{tikzpicture}

\begin{axis}[%
width=.8\columnwidth,
height=0.6\columnwidth,
scale only axis,
xmin=0.15,
xmax=0.5,
xlabel={cutting stiffness [-]},
ymin=0,
ymax=2,
ylabel={time delay margin [s]},
legend style={draw=black,fill=white,legend cell align=left}
]

\addplot [
color=lightgray!80!black,
solid,
line width=3.0pt,
]
table[row sep=crcr]{
0.26 99.9992370605469\\
0.265 0.84991455078125\\
0.27 0.7965087890625\\
0.28 0.73699951171875\\
0.3 0.6622314453125\\
0.35 0.547027587890625\\
0.4 0.470733642578125\\
0.45 0.414276123046875\\
};
\addlegendentry{Cor. 5.3. \cite{Zhang2002} };

\addplot [
color=mycolor1,
dash pattern=on 1pt off 3pt on 3pt off 3pt,
line width=3.0pt,
]
table[row sep=crcr]{
0.26 99.9992370605469\\
0.265 1.49154663085938\\
0.27 1.397705078125\\
0.28 1.31683349609375\\
0.3 1.2054443359375\\
0.35 1.0009765625\\
0.4 0.860595703125\\
0.45 0.756072998046875\\
};
\addlegendentry{$\Pi_1 \text{ and } \bar{\Pi}_3$};

\addplot [
color=mycolor2,
dotted,
line width=3.0pt,
]
table[row sep=crcr]{
0.26 99.9992370605469\\
0.265 1.1138916015625\\
0.27 0.9796142578125\\
0.28 0.872802734375\\
0.3 0.753021240234375\\
0.35 0.591278076171875\\
0.4 0.496673583984375\\
0.45 0.4302978515625\\
};
\addlegendentry{$\Pi_1 \text{ and } \bar{\Pi}_2$};

\end{axis}
\end{tikzpicture}
\caption{Time Delay Margin vs Cutting Stiffness}
\label{fig:ex3results}
\end{figure}
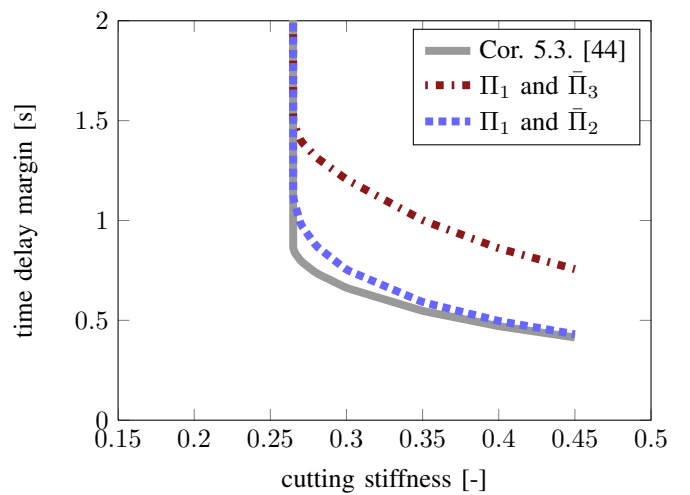

\section{Conclusions}

This paper developed stability and performance analysis conditions for
nonlinear and LPV time-delayed systems.  The approach bounds the
behavior of the time delay using integral quadratic constraints
(IQCs).  IQCs are typically specified as frequency domain constraints
on the inputs and outputs of an operator. For constant time delays,
such IQCs were shown to have an intuitive geometric interpretation at
each frequency.  This intuition led to the construction of new IQCs
for both constant and varying delays.  Dissipation inequalities were
then provided that incorporate IQCs into the analysis of delayed
systems.  This analysis approach applied existing results to obtain an
equivalent time-domain intepretation for IQCs.  Finally, numerical
examples were provided to demonstrate the proposed methods. While this
paper only considers a single time delay, the framework can easily
incorporate additional delays and uncertainties analysis. Future work
will explore the connections between the IQCs/dissipation
inequality approach and standard Lyapunov methods.



\section*{Acknowledgments}

This work was partially supported by NASA under Grant No. NRA
NNX12AM55A entitled ``Analytical Validation Tools for Safety Critical
Systems Under Loss-of-Control Conditions'', Dr. C. Belcastro technical
monitor.  This work was also partially supported by the National
Science Foundation under Grant No. NSF-CMMI-1254129 entitled ``CAREER:
Probabilistic Tools for High Reliability Monitoring and Control of
Wind Farms''. Any opinions, findings, and conclusions or
recommendations expressed in this material are those of the authors
and do not necessarily reflect the views of NASA or NSF.


\bibliographystyle{plain}
\bibliography{IQCandSOS}


\appendix

\subsection{Time-varying IQC}
\label{app:newiqc}

This appendix provides a proof that if $r<1$ then $\Str \in
\text{IQC}(\Pi_6)$ where the multiplier $\Pi_6$ is defined in
\cref{sec:vtdqc}.  Two preliminary lemmas are required to prove this
result.

\vspace{0.1in}
\begin{lemma}
  \label{lem:hfbound}
  If $r<1$ then $\| \Str - \frac{1}{2} \Stbar \| \le 1 +
  \frac{1}{\sqrt{1-r}}$.
\end{lemma}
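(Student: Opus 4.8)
The plan is to reduce this estimate to the triangle inequality for induced $L_2$ gains, using the two elementary norm bounds already recalled in Section~\ref{sec:vtdqc}. First I would undo the loop-shift: since $\Str = \Dtr - I$ and $\Stbar = \Dtbar - I$ by definition, we get $\Str - \tfrac12\Stbar = \Dtr - \tfrac12\Dtbar - \tfrac12 I$. Thus bounding $\norm{\Str - \tfrac12\Stbar}$ amounts to bounding the induced $L_2$ gain of this affine combination of the time-varying delay $\Dtr$, the constant delay $\Dtbar$, and the identity $I$, all viewed as operators on $L_2[0,\infty)$.

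Next I would assemble the three building-block bounds. The identity trivially satisfies $\norm{I}=1$. The pure delay $\Dtbar$ is norm non-increasing on $L_2[0,\infty)$ under the zero-initial-condition convention of Section~\ref{sec:probform}: if $\tilde w = \Dtbar(v)$ then $\tilde w(t) = 0$ on $[0,\bar\tau)$ and $\tilde w(t)=v(t-\bar\tau)$ for $t\ge\bar\tau$, so $\norm{\tilde w}=\norm{v}$ and hence $\norm{\Dtbar}\le 1$ (this is also the $r=0$ specialization of the varying-delay bound). Finally, because $r<1$, the varying delay obeys $\norm{\Dtr}\le \tfrac{1}{\sqrt{1-r}}$, which is the bound quoted in Section~\ref{sec:vtdqc} from Section~3.2 of \cite{Gu2002} and Lemma~1 of \cite{Kao2007}.

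Combining these via subadditivity of the induced-gain norm yields $\norm{\Str - \tfrac12\Stbar} \le \norm{\Dtr} + \tfrac12\norm{\Dtbar} + \tfrac12\norm{I} \le \tfrac{1}{\sqrt{1-r}} + \tfrac12 + \tfrac12 = 1 + \tfrac{1}{\sqrt{1-r}}$, which is exactly the claimed inequality. I do not anticipate a real obstacle here; the only point needing mild care is confirming that $\Dtr$, $\Dtbar$, and $I$ are all well-defined bounded operators on the common space $L_2[0,\infty)$ with consistent zero-initial-condition conventions, so that the triangle inequality for induced norms applies without modification.
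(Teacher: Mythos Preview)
Your argument is correct and is essentially the paper's own proof: expand $\Str - \tfrac{1}{2}\Stbar = \Dtr - \tfrac{1}{2}\Dtbar - \tfrac{1}{2}I$, apply the triangle inequality for induced $L_2$ gains, and invoke $\|\Dtr\|\le 1/\sqrt{1-r}$ together with $\|\Dtbar\|\le 1$. The only difference is that you spell out the zero-initial-condition convention and the identity bound explicitly, which the paper leaves implicit.
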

\begin{proof}
  Apply the triangle inequality as well as the definitions of 
  $\Str$ and $\Stbar$ to obtain the following bound:
  \begin{align}
    \| \Str - \frac{1}{2} \Stbar \| \le  
    \| \Dtr \| + \frac{1}{2} \| \Dtbar \| + \frac{1}{2}
  \end{align}
  The varying delay is bounded as $\| \Dtr \| \le
  \frac{1}{\sqrt{1-r}}$ \cite{Gu2002,Kao2007} while the constant delay
  is bounded by $\| \Dtbar \| \le 1$.
\end{proof}

\vspace{0.1in}
\begin{lemma}
  \label{lem:lfbound}
$\| ( \Str - \frac{1}{2} \Stbar ) \circ \frac{1}{s} \|
\le \frac{1}{2} \bar{\tau} $.
\end{lemma}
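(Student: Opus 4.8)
The plan is to realize $(\Str-\tfrac12\Stbar)\circ\tfrac1s$ as an integral operator with an explicit, bounded, finitely supported kernel. Fix an input $u\in L_2[0,\infty)$, extended by zero to all of $\R$, and let $v$ denote the output of the integrator, $v(t):=\int_{-\infty}^{t}u(s)\,ds$; this is well defined and continuous by Cauchy--Schwarz. Writing out the zero-initial-condition definitions of $\Str$ and $\Stbar$ from \cref{sec:probform} (equivalently, extending $v$ by zero for negative times), the output is
\begin{align*}
  y(t) := \bigl(\Str-\tfrac12\Stbar\bigr)(v)(t) = v\bigl(t-\tau(t)\bigr) - \tfrac12 v(t) - \tfrac12 v(t-\bar\tau).
\end{align*}
Substituting $v(t-a)=\int_{\R}\mathbf 1\{s\le t-a\}\,u(s)\,ds$ then gives $y(t)=\int_{\R} g(t,s)\,u(s)\,ds$, where
\begin{align*}
  g(t,s) := \mathbf 1\{s\le t-\tau(t)\} - \tfrac12\,\mathbf 1\{s\le t\} - \tfrac12\,\mathbf 1\{s\le t-\bar\tau\}.
\end{align*}

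The next step is to bound this kernel using $0\le\tau(t)\le\bar\tau$, i.e. $t-\bar\tau\le t-\tau(t)\le t$. A short case analysis on the location of $s$ relative to the three breakpoints shows $g(t,s)=0$ for $s\le t-\bar\tau$ and for $s>t$, while $g(t,s)=+\tfrac12$ on $(t-\bar\tau,\,t-\tau(t)]$ and $g(t,s)=-\tfrac12$ on $(t-\tau(t),\,t]$. Hence $g(t,\cdot)$ is supported in an interval of length $\bar\tau$ with $\abs{g(t,s)}\le\tfrac12$, so $\int_{\R}\abs{g(t,s)}^2\,ds=\tfrac{\bar\tau}{4}$. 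This is exactly where the correction $-\tfrac12\Stbar$ earns its keep: the bare operator $\Str\circ\tfrac1s$ has kernel $\mathbf 1\{s\le t-\tau(t)\}-\mathbf 1\{s\le t\}$, of magnitude $1$ on an interval of length $\tau(t)$; subtracting $\tfrac12\Stbar$ halves the magnitude at the price of widening the support to length $\bar\tau$, and $\sqrt{\tfrac14\,\bar\tau\cdot\bar\tau}=\tfrac12\bar\tau$.

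Finally I would finish with Cauchy--Schwarz and Fubini. Pointwise in $t$,
\begin{align*}
  \abs{y(t)}^2 \le \Bigl(\int_{\R}\abs{g(t,s)}^2\,ds\Bigr)\Bigl(\int_{t-\bar\tau}^{t}\abs{u(s)}^2\,ds\Bigr) \le \frac{\bar\tau}{4}\int_{t-\bar\tau}^{t}\abs{u(s)}^2\,ds,
\end{align*}
and integrating over $t\ge 0$ and interchanging the order of integration (for fixed $s$ the admissible $t$ form the interval $[s,s+\bar\tau]$, of length $\bar\tau$) yields $\norm{y}^2\le\tfrac{\bar\tau^2}{4}\norm{u}^2$, hence $\norm{(\Str-\tfrac12\Stbar)\circ\tfrac1s}\le\tfrac12\bar\tau$. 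The only delicate point, and hence the main obstacle, is the bookkeeping: verifying that the zero-initial-condition conventions for $\Str$ and $\Stbar$ are faithfully encoded by extending $v$ (and $u$) by zero to negative times, together with the piecewise evaluation of $g(t,s)$; once the kernel is in hand the estimate is routine.
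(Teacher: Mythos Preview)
Your proposal is correct and follows essentially the same route as the paper's proof: both realize $(\Str-\tfrac12\Stbar)\circ\tfrac1s$ as an integral operator whose kernel equals $\pm\tfrac12$ on $(t-\bar\tau,t]$ and vanishes elsewhere, then apply Cauchy--Schwarz pointwise and Fubini to obtain $\|y\|^2\le\tfrac{\bar\tau^2}{4}\|u\|^2$. Your presentation is simply more explicit about the kernel and the case analysis, and adds a helpful remark on why the $-\tfrac12\Stbar$ correction halves the constant.
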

\begin{proof}
  The proof is only sketched as it is similar to that
  given for Lemma 1 in \cite{Kao2004}.  To simplify notation 
  define $\Delta:=( \Str - \frac{1}{2} \Stbar ) \circ \frac{1}{s}$.
  Consider $w=\Delta v$ for some $v\in L_2[0,\infty)$ and
  define $y(t):=\int_0^t v(\alpha) d\alpha$.  Thus
  $w=( \Str - \frac{1}{2} \Stbar )(y)$ which, after some algebra,
  gives
  \begin{align}
    w(t) = \int_{t-\bar \tau}^t s(\alpha) v(\alpha) d\alpha
  \end{align}
  where $s(\alpha) = +\frac{1}{2}$ for $\alpha \in [t-\bar \tau,t-\tau(t)]$
  and $s(\alpha) = -\frac{1}{2}$ for $\alpha \in [t-\tau(t),t]$.
  The Cauchy-Schwartz inequality can then be used to show
  \begin{align}
    w(t)^2 \le \frac{\bar \tau}{4} \int_{t-\bar{t}}^t v^2(\alpha) d\alpha
  \end{align}
  Integrate this inequality from $t=0$ to $t=\infty$ and perform
  a change of variables to obtain
  $\|w\|^2 \le \frac{\bar{\tau}^2}{4} \| v\|^2$. 
\end{proof}

\vspace{0.1in}
\begin{theorem}
\label{thm:newiqc}
Let $\phi_6(s)$ be the transfer function defined in
Equation~\ref{eq:phi6}.  If $r<1$ then $\| ( \Str - \frac{1}{2} \Stbar
) \circ \phi_6^{-1} \| \le 1$.
\end{theorem}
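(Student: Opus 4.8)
The plan is to reduce the stated gain bound to a weighted $L_2$ inequality and then prove that inequality by splitting the frequency axis along the two branches of the magnitude constraint in Equation~\ref{eq:phi6}. First observe that $\phi_6$ is a proper, stable rational function whose magnitude is bounded below by $\min\{\tfrac12\bar\tau|\omega|,\,1+\tfrac{1}{\sqrt{1-r}}\}$, which is strictly positive for every $\omega\in\R\cup\{\infty\}$; hence $\phi_6$ and $\phi_6^{-1}$ are bounded multiplication operators on $L_2$ and $\phi_6^{-1}$ is a bijection of $L_2$. Substituting $v=\phi_6^{-1}u$ then shows that $\|(\Str-\tfrac12\Stbar)\circ\phi_6^{-1}\|\le 1$ is equivalent to
\begin{align}
\|(\Str-\tfrac12\Stbar)(v)\| \;\le\; \|\phi_6\, v\| \qquad \text{for all } v\in L_2(\R),
\end{align}
which is in turn exactly the assertion $\Str\in\text{IQC}(\Pi_6)$: expanding the quadratic form of $\Pi_6$ and applying Parseval's theorem, the $-\tfrac14|\hatStbar|^2$ entry completes the square and the IQC collapses to $\|\phi_6 v\|^2\ge\|\Str(v)-\tfrac12\Stbar(v)\|^2$. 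I work on $L_2(\R)$ rather than $L_2[0,\infty)$ so that the frequency cutoff below stays inside the function space; the operator $\Str-\tfrac12\Stbar$ and the bounds of \cref{lem:hfbound,lem:lfbound} extend to $L_2(\R)$ without change.

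Next I would decompose an arbitrary $v\in L_2(\R)$ as $v=v_L+v_H$, where $\hat v_L$ is the restriction of $\hat v$ to $\Omega_L:=\{\omega:\tfrac12\bar\tau|\omega|\le 1+\tfrac{1}{\sqrt{1-r}}\}$ and $\hat v_H$ its restriction to the complement. These pieces are $L_2$-orthogonal, and since $\phi_6$ acts frequency-by-frequency they remain orthogonal after filtering, so $\|\phi_6 v\|^2=\|\phi_6 v_L\|^2+\|\phi_6 v_H\|^2$. On the low-frequency piece $v_L$ has bounded spectral support, hence $\dot v_L\in L_2$; writing $v_L=\tfrac1s\dot v_L$ and applying \cref{lem:lfbound} gives $\|(\Str-\tfrac12\Stbar)(v_L)\|\le\tfrac12\bar\tau\|\dot v_L\|$, while $|\phi_6(j\omega)|>\tfrac12\bar\tau|\omega|$ on $\Omega_L$ gives $\|\phi_6 v_L\|^2\ge\tfrac14\bar\tau^2\|\dot v_L\|^2$, so $\|(\Str-\tfrac12\Stbar)(v_L)\|\le\|\phi_6 v_L\|$. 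On the high-frequency piece \cref{lem:hfbound} gives $\|(\Str-\tfrac12\Stbar)(v_H)\|\le(1+\tfrac{1}{\sqrt{1-r}})\|v_H\|$, while $|\phi_6(j\omega)|>1+\tfrac{1}{\sqrt{1-r}}$ on $\Omega_H$ gives $\|\phi_6 v_H\|^2\ge(1+\tfrac{1}{\sqrt{1-r}})^2\|v_H\|^2$, so $\|(\Str-\tfrac12\Stbar)(v_H)\|\le\|\phi_6 v_H\|$.

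Combining the two regime estimates into $\|(\Str-\tfrac12\Stbar)(v)\|\le\|\phi_6 v\|$ is the step I expect to be hardest. A bare triangle inequality only gives $\|(\Str-\tfrac12\Stbar)(v)\|\le\|\phi_6 v_L\|+\|\phi_6 v_H\|$, which can exceed $\sqrt{\|\phi_6 v_L\|^2+\|\phi_6 v_H\|^2}=\|\phi_6 v\|$, so the cross term $2\langle(\Str-\tfrac12\Stbar)(v_L),(\Str-\tfrac12\Stbar)(v_H)\rangle$ must be controlled. The route I would try first is to exploit structure: $\Str-\tfrac12\Stbar=\Dtr-\tfrac12\Dtbar-\tfrac12 I$, and the operators $\Dtbar$ and $I$ are frequency multipliers that preserve the $\Omega_L/\Omega_H$ splitting and so contribute nothing to the cross term; only terms involving the genuinely time-varying operator $\Dtr$ survive, and these one bounds using the change-of-variables estimate behind \cref{lem:hfbound} together with $r<1$. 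If no exact cancellation is available, the fallback is a Young split $2ab\le a^2+b^2$ applied to the cross term; this closes the argument at the cost of reading the magnitude requirement on $\phi_6$ with its constants enlarged by a factor of order $\sqrt2$ near the transition frequency $\omega_0:=2(1+\tfrac{1}{\sqrt{1-r}})/\bar\tau$, after which one only has to check (via a Taylor expansion of $\phi_6$ near $\omega=0$ and continuity at $\omega_0$) that a rational $\phi_6$ meeting the enlarged bound exists.
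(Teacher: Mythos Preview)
Your approach mirrors the paper's: split the input into low- and high-frequency pieces at the transition frequency in \cref{eq:phi6}, bound each piece separately via \cref{lem:lfbound} and \cref{lem:hfbound}, then recombine. The only cosmetic difference is that the paper works directly with $\Delta:=(\Str-\tfrac12\Stbar)\circ\phi_6^{-1}$ and bounds $\|\Delta v_L\|\le\|v_L\|$, $\|\Delta v_H\|\le\|v_H\|$, whereas you first pass to the equivalent form $\|(\Str-\tfrac12\Stbar)v\|\le\|\phi_6 v\|$.

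The step you flag as hardest is exactly where the paper's sketch is loose: it concludes with $\|\Delta v\|\le\|v_L\|+\|v_H\|=\|v\|$, but for orthogonal pieces one only has $\|v_L\|+\|v_H\|\ge\|v\|$, with equality only when one piece vanishes. The paper explicitly defers the details to Proposition~2 of \cite{Kao2007}, which carries the same combination step, so you have correctly spotted a gap that the paper's own sketch does not close. Your diagnosis of the difficulty---the LTI parts $\tfrac12\Dtbar+\tfrac12 I$ respect the frequency split while the genuinely time-varying $\Dtr$ mixes the bands---is accurate.

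On your proposed fixes: the $\sqrt2$ fallback does close the argument, but it proves a weaker theorem (one needs $|\phi_6|$ to exceed $\sqrt2$ times the piecewise bound in \cref{eq:phi6}, not the bound as written). The structural route is less promising than you suggest: after stripping the LTI terms the cross term still involves $\langle\Dtr v_L,\Dtr v_H\rangle$ and inner products of $\Dtr v_L$ against the LTI images of $v_H$ (and vice versa), and the gain bound $\|\Dtr\|\le 1/\sqrt{1-r}$ gives no information about how $\Dtr$ redistributes energy across frequency, so there is no obvious cancellation to exploit. In short, your proposal reaches the same level of completeness as the paper's sketch, and you are right that the recombination is the nontrivial point.
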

\begin{proof}
  The proof is only sketched as it is essentially the same as that
  given for Proposition 2 in \cite{Kao2007}. Let $v\in L_2$ be an
  input signal and $\hat{v} := \FT(v)$ its corresponding Fourier
  Transform. Decompose $v$ as $v_L+v_H$ where $v_L$ and $v_H$ are the
  low and high frequency components, respectively.  Specifically, the
  low-frequency content is defined in the frequency domain by
  $\hat{v}_L(j\omega):=\hat{v}(j\omega)$ if $ |\omega| \le
  \frac{2}{\bar \tau} \left( 1 + \frac{1}{\sqrt{1-r}} \right)$ and
  $\hat{v}_L(j\omega):=0$ otherwise. The high-frequency content is
  defined similarly.  To simplify notation, define $\Delta = ( \Str -
  \frac{1}{2} \Stbar ) \circ \phi_6^{-1}$.  Then using the linearity of
  $\Delta$ and the triangle inequality yields
  \begin{align}
    \| \Delta v \| \le \| \Delta v_L \| + \| \Delta v_H \|
  \end{align}
  Lemmas~\ref{lem:hfbound} and \ref{lem:lfbound} bound the gains on
  the high and low frequency components by $\| \Delta v_H \| \le \|
  v_H \|$ and $\| \Delta v_L \| \le \| v_L \|$.  Thus $\| \Delta v \|
  \le \| v_L \| + \| v_H \| = \|v \|$.
\end{proof}
\vspace{0.1in}

The bound in Theorem~\ref{thm:newiqc} can be equivalently expressed as
a quadratic, frequency-domain constraint on the input/output signals
of $\Str$. This gives the desired result that $\Str$ satisfies the IQC
defined by the multiplier $\Pi_6$.

\subsection{IQC Factorizations}
\label{app:IQCfac}

This appendix provides specific numerical procedures to factorize
$\Pi = \Pi^\sim \in \RL^{m\times m}$ as $\Psi^\sim M \Psi$. Such
factorizations are not unique and this appendix presents
two useful factorizations.


First, let $(A_\pi,B_\pi,C_\pi,D_\pi)$ be a minimal state-space
realization for $\Pi$.  Separate $\Pi$ into its stable and unstable
parts $\Pi = G_S + G_U$.  Let $(A,B,C,D_\pi)$ denote a state space
realization for the stable part $G_S$. The matrix $A$ is Hurwitz since
$G_S$ is stable. The assumptions on $\Pi$ can be used to show that the
poles of $\Pi$ are symmetric about the imaginary axis and, moreover,
$G_U$ has a state space realization of the form $(-A^T,-C^T,B^T,0)$
(Section 7.3 of \cite{francis87}). Thus $\Pi = G_S + G_U$ can be
written in the form $\Pi = \Psi^\sim M \Psi$ where
\begin{align}
\Psi(s) & := \bmtx (sI-A)^{-1} B \\ I \emtx \\
M & := \bmtx 0 & C^T \\ C & D_\pi \emtx 
\end{align}
This provides a factorization $\Pi = \Psi^\sim M \Psi$ where $M=M^T
\in \R^{n_z \times n_z}$ and $\Psi \in \RH^{n_z \times m}$.  For this
factorization $\Psi$ is, in general, non-square ($n_z\ne m$) and
it may have right-half plane zeros.

The stability theorems in this paper require a special factorization
such that $\Psi$ is square ($n_z = m$), stable, and minimum phase.
More precisely, given non-negative integers $p$ and $q$, let $J_{p,q}$
denote the signature matrix $\bsmtx I_p & 0 \\ 0 & -I_q \esmtx$.
$\Psi$ is called a $J_{p,q}$-spectral factor of $\Pi$ if $\Pi =
\Psi^\sim J_{p,q} \Psi$ and $\Psi, \Psi^{-1} \in \RH^{m\times m}$.
The term $J$-spectral factor will be used if the values of $p$ and $q$
are not important.  $J$-spectral factorizations have been used to
construct (sub-optimal) solutions to the $H_\infty$ optimal control
problem \cite{green90,kimura92,francis87}.  The next lemma provides
a necessary and sufficient condition for constructing a $J$-spectral
factorization of $\Pi$.

\vspace{0.1in}
\begin{lemma}
  \label{lem:PIfac}
  Let $\Pi \in \RL^{m\times m}$ be a multiplier in the form:
  \begin{align}
    \Pi(s) = 
    \bmtx (sI-A)^{-1} B \\ I \emtx^\sim
    \bmtx 0 & C^T \\ C & D_\pi \emtx
    \bmtx (sI-A)^{-1} B \\ I \emtx
  \end{align}
  where $A$ is Hurwitz. The following statements are equivalent:
  \begin{enumerate}
  \item $D_\pi=D_\pi^T$ is nonsingular and there exists a unique real solution
    $X=X^T$ to the following ARE
    \begin{align}
      \label{eq:are}
      A^TX + XA - (XB+C^T) D_\pi^{-1} (B^TX+C) = 0 
    \end{align}
    such that $A-BD_\pi^{-1} \left(B^TX + C \right)$ is Hurwitz.
  \item $\Pi$ has a $J_{p,q}$ spectral factorization where $p$ and $q$
    are the number of positive and negative eigenvalues of $D_\pi$,
    respectively.  Moreover, $\Psi$ is a $J_{p,q}$-spectral factor of $\Pi$
    if and only if it has a state-space realization $\left( A, B
      ,J_{p,q} W^{-T} \left( B^TX+C \right), W \right)$ where $W$ is a
    solution of $D_\pi = W^T J_{p,q} W$.
  \end{enumerate}
\end{lemma}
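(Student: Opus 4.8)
The plan is to split the proof into three parts: a congruence/inertia fact for $D_\pi$, a direct construction establishing the implication $1\Rightarrow 2$, and the converse $2\Rightarrow 1$ together with the ``only if'' half of the parametrization of $\Psi$.

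I would first dispose of the $D_\pi$ preliminaries. By Sylvester's law of inertia, a symmetric $D_\pi$ factors as $D_\pi=W^TJ_{p,q}W$ with $W$ nonsingular exactly when $D_\pi$ is nonsingular with $p$ positive and $q$ negative eigenvalues. Conversely, if $\Pi=\Psi^\sim J_{p,q}\Psi$ with $\Psi,\Psi^{-1}\in\RH^{m\times m}$, then evaluating at $s=\infty$ gives $D_\pi=\Psi(\infty)^TJ_{p,q}\Psi(\infty)$ with $\Psi(\infty)$ nonsingular (since $\Psi^{-1}$ is proper); hence $D_\pi$ is nonsingular and has the inertia of $J_{p,q}$. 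This forces $D_\pi$ nonsingular in statement~1 and identifies $p,q$ in statement~2.

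For $1\Rightarrow 2$ I would exhibit the factor directly. Put $F:=D_\pi^{-1}(B^TX+C)$ and $\Psi(s):=W\bigl(I+F(sI-A)^{-1}B\bigr)$, which has realization $(A,B,WF,W)$; since $WD_\pi^{-1}=W(W^TJ_{p,q}W)^{-1}=J_{p,q}W^{-T}$, the output matrix is $WF=J_{p,q}W^{-T}(B^TX+C)$, matching the lemma. To verify $\Psi^\sim J_{p,q}\Psi=\Pi$, expand
\[
\Psi^\sim J_{p,q}\Psi=\bigl(I+B^T(-sI-A^T)^{-1}F^T\bigr)D_\pi\bigl(I+F(sI-A)^{-1}B\bigr).
\]
The constant term is $D_\pi$. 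Using $D_\pi F=B^TX+C$ and $F^TD_\pi=XB+C^T$, the two strictly proper terms equal $C(sI-A)^{-1}B+B^TX(sI-A)^{-1}B$ and its para-Hermitian conjugate. For the quadratic term, substitute the ARE in the form $F^TD_\pi F=(XB+C^T)D_\pi^{-1}(B^TX+C)=A^TX+XA$, and then the identity $(-sI-A^T)^{-1}(A^TX+XA)(sI-A)^{-1}=-X(sI-A)^{-1}-(-sI-A^T)^{-1}X$, valid for any symmetric $X$; the $B^TX(sI-A)^{-1}B$ and $B^T(-sI-A^T)^{-1}XB$ contributions then cancel and one is left with $D_\pi+C(sI-A)^{-1}B+B^T(-sI-A^T)^{-1}C^T=\Pi$. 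Finally $\Psi\in\RH^{m\times m}$ because $A$ is Hurwitz, and $\Psi^{-1}=(A-BF,\,BW^{-1},\,-F,\,W^{-1})\in\RH^{m\times m}$ because $A-BF=A-BD_\pi^{-1}(B^TX+C)$ is Hurwitz --- precisely the stabilizing property in statement~1.

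For $2\Rightarrow 1$ and the ``only if'' part of the parametrization I would invoke the classical correspondence between $J$-spectral factorization and the stabilizing Riccati solution. Given a $J$-spectral factor, the preliminary paragraph already yields nonsingularity of $D_\pi$ and the values of $p,q$; to produce $X$ I would pass to the Hamiltonian $H=\bsmtx A-BD_\pi^{-1}C & -BD_\pi^{-1}B^T \\ C^TD_\pi^{-1}C & -(A-BD_\pi^{-1}C)^T\esmtx$ and use that existence of a stable, stably invertible $J$-spectral factor is equivalent to $H$ having no eigenvalues on the imaginary axis with its stable invariant subspace a graph over the first block of coordinates, i.e.\ to existence of the stabilizing $X$ (see \cite{francis87,green90,kimura92}). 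Once $X$ is known to exist, let $\tilde\Psi$ be the factor built above; for any $J_{p,q}$-spectral factor $\Psi$, the product $U:=\tilde\Psi\Psi^{-1}$ is stable with stable inverse and satisfies $U^\sim J_{p,q}U=J_{p,q}$, and such a ``$J$-all-pass'' function must be a constant $J_{p,q}$-unitary matrix; hence $\Psi=(A,B,U^{-1}WF,U^{-1}W)$, which is of the stated form with $W':=U^{-1}W$ again a solution of $D_\pi=(W')^TJ_{p,q}W'$, while conversely every such $W'$ yields a spectral factor by the $1\Rightarrow 2$ construction. \textbf{The main obstacle} is exactly this converse: the $1\Rightarrow 2$ computation is self-contained once the $(\cdot)^\sim$ bookkeeping is handled carefully, whereas $2\Rightarrow 1$ genuinely requires the Hamiltonian/invariant-subspace theory and the ``constant $J$-all-pass'' uniqueness statement, which I would cite rather than reprove.
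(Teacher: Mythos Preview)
Your argument is correct. The $1\Rightarrow 2$ direction is carried out cleanly: the identity $(-sI-A^T)^{-1}(A^TX+XA)(sI-A)^{-1}=-X(sI-A)^{-1}-(-sI-A^T)^{-1}X$ is exactly what makes the cross terms cancel, and your inverse realization confirms that the stabilizing property of $X$ is precisely what forces $\Psi^{-1}\in\RH$. For $2\Rightarrow 1$, your reduction to the Hamiltonian having no imaginary eigenvalues and a graph stable subspace is the standard route, and your ``constant $J$-all-pass'' argument for the parametrization is correct: $U^{-1}=J_{p,q}U^\sim J_{p,q}$ together with $U,U^{-1}\in\RH$ forces $U$ to be constant.

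The paper, by contrast, does not prove the lemma at all: it simply records that the result is a special case of Theorem~2.4 in \cite{meinsma95}, with background from \cite{bart79} and Chapter~7 of \cite{francis87}. So your approach is genuinely different in presentation, though not in substance. What you gain is a self-contained algebraic verification of the ``if'' direction and an explicit mechanism (the $J$-unitary freedom in $W$) for the parametrization; what the paper gains is brevity and a direct pointer to the sharpest available statement. Since your $2\Rightarrow 1$ still appeals to the Hamiltonian/invariant-subspace theory in \cite{francis87,green90}, the two approaches ultimately rest on the same body of results; yours just unpacks the constructive half.
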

\begin{proof}
  This lemma is based on the canonical factorization results in
  \cite{bart79} and summarized in Chapter 7 of \cite{francis87}.  The
  precise wording of this lemma is a special case of Theorem 2.4 in
  \cite{meinsma95}.
\end{proof}
\vspace{0.1in}

The numerical procedure to construct a $J$-spectral factorization of
$\Pi=\Pi^\sim$ can be summarized by the following steps.  First,
express $\Pi$ with a minimal realization
$(A_\pi,B_\pi,C_\pi,D_\pi)$. Second, compute a state space realization
$(A,B,C,D_\pi)$ for the stable part of $\Pi$. This step can be done
using the Matlab command $\texttt{stabsep}$. Third, attempt to solve
for the stabilizing solution $X=X^T$ of the ARE in
Equation~\ref{eq:are}.  This step can be done using the Matlab command
$\texttt{care}$.  The existence of a stabilizing solution is related
to the eigenvalues/eigenspaces of a related Hamiltonian matrix (see
\cite{zhou96}).  If this step is unsuccessful, i.e. no such solution
exists, then $\Pi$ does not have a $J$-spectral factorization by
\cref{lem:PIfac}.  However, if the ARE has a unique stabilizing
solution then construct the state-space realization for $\Psi$ as
defined in Statement 2) of \cref{lem:PIfac}.  This requires the matrix
decomposition $D_\pi = W^T J_{p,q} W$ which can be computed from an
eigenvalue decomposition of $D_\pi$. This entire procedure, if
successful, yields the factorization $\Pi = \Psi^\sim M \Psi$ where
$M:=J_{p,q}$ and $\Psi, \Psi^{-1} \in \RH^{m\times m}$.

The last result of this appendix provides a simple frequency domain
condition that is sufficient for the existence of a $J$-spectral
factor of a multiplier $\Pi=\Pi^\sim$.

\vspace{0.1in}
\begin{lemma}
  \label{lem:PNiqc}
  Let $\Pi=\Pi^\sim \in \RL^{(m_1+m_2) \times (m_1+m_2)}$ be
  partitioned as $\bsmtx \Pi_{11} & \Pi_{12} \\ \Pi_{12}^\sim &
  \Pi_{22} \esmtx$ where $\Pi_{11} \in \RL^{m_1 \times m_1}$ and
  $\Pi_{22} \in \RL^{m_2 \times m_2}$.  Assume $\Pi_{11}(j\omega) > 0$
  and $\Pi_{22}(j\omega) < 0$ for all $\omega \in \R \cup \{
      \infty \}$. Then $\Pi$ has a $J_{m_1,m_2}$-spectral factorization.
\end{lemma}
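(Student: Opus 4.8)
The plan is to verify condition~1) of \cref{lem:PIfac}; that lemma then returns a $J_{p,q}$‑spectral factor of $\Pi$, which is precisely the claimed factorization, provided the signature $(p,q)$ of the feedthrough term comes out to be $(m_1,m_2)$. So the first step is to put $\Pi$ into the standard form required by \cref{lem:PIfac}, following the construction in the opening paragraphs of this appendix: starting from a minimal realization of $\Pi$, split $\Pi=G_S+G_U$ into stable and anti‑stable parts and write $\Pi(s)=\bsmtx(sI-A)^{-1}B\\ I\esmtx^{\sim}\bsmtx 0 & C^{T}\\ C & D_\pi\esmtx\bsmtx(sI-A)^{-1}B\\ I\esmtx$ with $A$ Hurwitz, $D_\pi=\Pi(\infty)$, and $(A,B,C)$ inheriting minimality (hence $(A,B)$ controllable, $(C,A)$ observable) from the realization of $\Pi$.

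The one genuinely computational step is an inertia count via a Schur complement. Fix $\omega\in\R\cup\{\infty\}$ and suppress the argument $j\omega$; note that on the imaginary axis the $(2,1)$ block is $\Pi_{12}^{*}$. Since $\Pi_{11}>0$ is invertible, the congruence
\[
\bmtx I & -\Pi_{11}^{-1}\Pi_{12}\\ 0 & I\emtx^{*}
\bmtx \Pi_{11} & \Pi_{12}\\ \Pi_{12}^{*} & \Pi_{22}\emtx
\bmtx I & -\Pi_{11}^{-1}\Pi_{12}\\ 0 & I\emtx
=\operatorname{diag}\!\big(\Pi_{11},\ \Pi_{22}-\Pi_{12}^{*}\Pi_{11}^{-1}\Pi_{12}\big)
\]
block‑diagonalizes $\Pi(j\omega)$. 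Because $\Pi_{11}^{-1}>0$, the Schur complement obeys $\Pi_{22}-\Pi_{12}^{*}\Pi_{11}^{-1}\Pi_{12}\le\Pi_{22}<0$, so by Sylvester's law of inertia $\Pi(j\omega)$ has exactly $m_1$ positive and $m_2$ negative eigenvalues for every $\omega\in\R\cup\{\infty\}$. In particular $\Pi(j\omega)$ is nonsingular on the whole imaginary axis, and evaluating at $\omega=\infty$ shows that $D_\pi=D_\pi^{T}$ is nonsingular with $p=m_1$ positive and $q=m_2$ negative eigenvalues, so a $W$ with $D_\pi=W^{T}J_{m_1,m_2}W$ exists.

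What remains is to produce the unique stabilizing symmetric solution $X$ of the Riccati equation~\eqref{eq:are}. For this I would invoke the canonical $J$‑spectral / Wiener--Hopf factorization theory (\cite{bart79}; see also Theorem~2.4 of \cite{meinsma95}): for a para‑Hermitian $\Pi$ in the above minimal, Hurwitz‑$A$ standard form, the equivalent conditions of \cref{lem:PIfac} hold exactly when $D_\pi$ is nonsingular and $\Pi(j\omega)$ is nonsingular for all $\omega\in\R$. The mechanism is the standard identity relating the Hamiltonian matrix of~\eqref{eq:are} to the spectral density: its purely imaginary eigenvalues are precisely the imaginary‑axis zeros of $\det\Pi$, so the Hamiltonian has none, and minimality then upgrades this to the existence of a (necessarily unique) Hermitian stabilizing solution. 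Both nonsingularity requirements were just verified, so \cref{lem:PIfac} applies with $p=m_1$, $q=m_2$ and delivers $\Psi,\Psi^{-1}\in\RH^{m\times m}$ with $\Pi=\Psi^{\sim}J_{m_1,m_2}\Psi$.

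The main obstacle is exactly this last step. For an \emph{indefinite} middle matrix $D_\pi$, the implication ``no imaginary Hamiltonian eigenvalue $\Rightarrow$ Hermitian stabilizing Riccati solution exists'' is not the textbook LQR statement; it genuinely relies on the para‑Hermitian symmetry of $\Pi$ together with minimality of the realization, and I would lean on the cited factorization results for it rather than reprove it here. Everything else — the Schur‑complement inertia count, identifying $p$ and $q$, and extracting $W$ — is routine linear algebra.
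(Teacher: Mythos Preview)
Your Schur‐complement step is correct and useful: it cleanly shows that $\Pi(j\omega)$ is nonsingular with constant inertia $(m_1,m_2)$ on the extended imaginary axis, which in particular pins down the signature of $D_\pi$ and hence the $J_{m_1,m_2}$ in the factorization. That part improves on the paper, which just says ``the sign definite conditions can be used to show \ldots'' and points to \cite{seiler13}.

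The gap is the last implication. You assert that for a para‐Hermitian $\Pi$ in the minimal, Hurwitz‐$A$ standard form, the conditions of \cref{lem:PIfac} hold \emph{exactly when} $D_\pi$ is nonsingular and $\Pi(j\omega)$ is nonsingular on $j\R$, the mechanism being ``no imaginary Hamiltonian eigenvalue $\Rightarrow$ Hermitian stabilizing Riccati solution.'' That equivalence is false in the indefinite case, even with para‐Hermitian symmetry and minimality. Take
\[
\Pi(s)=\begin{bmatrix}0&\dfrac{s-1}{s+1}\\[2mm]\dfrac{s+1}{s-1}&0\end{bmatrix}.
\]
This is para‐Hermitian, has $\det\Pi\equiv-1$ (so nonsingular on all of $j\R\cup\{\infty\}$ with inertia $(1,1)$), and its stable part admits the minimal data $A=-1$, $B=[\,0\ \ 1\,]$, $C=\bsmtx -2\\0\esmtx$, $D_\pi=\bsmtx 0&1\\1&0\esmtx$. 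The associated Hamiltonian is $\bsmtx 1&0\\0&-1\esmtx$, which has no imaginary eigenvalues, yet its stable eigenspace is $\mathrm{span}\,\bsmtx 0\\1\esmtx$ --- not a graph over the first coordinate. Equivalently, the unique ARE solution is $X=0$, and $A-BD_\pi^{-1}(B^TX+C)=1$ is not Hurwitz. So no $J$‐spectral factor exists, despite every hypothesis you invoke being satisfied. The para‐Hermitian symmetry only forces the partial indices to come in $\pm$ pairs; it does not force them to vanish.

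What the paper does instead is use the sign‐definiteness of $\Pi_{11}$ and $\Pi_{22}$ \emph{directly} to rule out equalizing vectors in the sense of \cite{meinsma95}, and it is precisely ``no equalizing vectors'' (a strictly stronger condition than pointwise nonsingularity) that Theorem~2.4 of \cite{meinsma95} turns into the stabilizing ARE solution. Your nonsingularity/inertia computation is a correct consequence of the hypotheses, but it discards exactly the extra information --- the definite diagonal blocks --- needed to close the argument. To repair the proof you must either carry out the equalizing‐vector argument (this is where $\Pi_{11}>0$, $\Pi_{22}<0$ are genuinely used, not just their corollary $\det\Pi\neq0$), or exhibit the graph property of the stable Hamiltonian subspace by some other route that again exploits the block sign structure.
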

\begin{proof}
  The sign definite conditions on $\Pi_{11}$ and $\Pi_{22}$ can be
  used to show that $\Pi$ has no equalizing vectors (as defined in
  \cite{meinsma95}) and hence the corresponding ARE has a unique
  stabilizing solution (Theorem 2.4 in \cite{meinsma95}).
  Details are given in \cite{seiler13}.
\end{proof}

\end{document}